\definecolor{mylinkcolor}{rgb}{0,0,0.8} 
\def\pdfstartlink@attr{attr{/Border[0 0 0 [1 5] ]/H/I/C[0 1 1]}}%
\def\@@Doi#1{\textcolor{mylinkcolor}{#1}\@@endlink}
\newtheorem{thm}{Theorem} 
\newtheorem{prop}{Proposition} 
\newtheorem{pty}{Property} 
\newtheorem{lem}{Lemma} 
\newtheorem{cor}{Corollary} 
\theoremstyle{definition}
\newtheorem*{rem*}{Remark} 
\newcommand{\N}{\mathbb{N}}
\newcommand{\R}{\mathbb{R}}
\newcommand{\ket}[1]{| #1 \rangle}
\newcommand{\ketbra}[2]{|#1\rangle\!\langle#2|}
\newcommand{\ot}{\otimes}
\newcommand{\stah}{ \ \mathrm{s.t.} \ }
\newcommand{\hplo}{H_{+}}
\newcommand{\hpl}[1]{H_{+}\!\left(#1 \right)}
\newcommand{\rC}{\mathrm{C}}
\newcommand{\rQ}{\mathrm{Q}}
\newcommand{\rB}{\mathrm{B}}
\newcommand{\rG}{\mathrm{G}}
\newcommand{\sta}{\mathcal{S}}
\newcommand{\eff}{\mathcal{F}}
\newcommand{\me}{\mathcal{E}}
\newcommand{\esp}{\mathcal{R}}
\newcommand{\tra}{\mathcal{T}}
\begin{document}
	\title{Analysing causal structures in generalised probabilistic theories}
	\author{Mirjam Weilenmann}
	\email{mirjam.weilenmann@oeaw.ac.at}
	\affiliation{Institute for Quantum Optics and Quantum Information (IQOQI)
		Vienna, Austrian Academy of Sciences, Boltzmanngasse 3, 1090
		Vienna, AT}
	\affiliation{Department of Mathematics, University of York, Heslington, York, YO10 5DD, UK}
	\orcid{0000-0003-0852-6763}
        \author{Roger Colbeck}
	\email{roger.colbeck@york.ac.uk}
	\affiliation{Department of Mathematics, University of York, Heslington, York, YO10 5DD, UK}
	\orcid{0000-0003-3591-0576}

	\begin{abstract}
		Causal structures give us a way to understand the origin of observed correlations. These were developed for classical scenarios, but quantum mechanical experiments necessitate their generalisation. Here we study causal structures in a broad range of theories, which include both quantum and classical theory as special cases.  We propose a method for analysing differences between such theories based on the so-called measurement entropy. We apply this method to several causal structures, deriving new relations that separate classical, quantum and more general theories within these causal structures. The constraints we derive for the most general theories are in a sense minimal requirements of any causal explanation in these scenarios. In addition, we make several technical contributions that give insight for the entropic analysis of quantum causal structures. In particular, we prove that for any causal structure and for any generalised probabilistic theory, the set of achievable entropy vectors form a convex cone.
	\end{abstract}
	
	\maketitle
	
	Given a set of observed variables, some of which may be correlated, a causal structure gives a more detailed picture of how the correlations come about.  Depending on the situation, this causal structure may posit the existence of hidden common causes and the nature of these depends on the physical theory.  For instance, the experimental violation of a Bell inequality~\cite{Bell1964} can be explained either by adapting the causal structure within the realm of classical physics (at the expense of resorting to fine-tuning~\cite{Wood2012}) or by allowing hidden systems to be non-classical.
	
	Causal structures also provide a suitable basis for analysing the features of different theories by allowing us to phrase communication and cryptographic protocols in terms of the dependencies among the involved systems. They help us predict the success of players engaged in a protocol when restricted according to different theories, for example, in random access coding and the related principle of Information Causality~\cite{Pawlowski2009, Al-Safi2011}.
	
	The differences between the observable correlations that can be achieved with classical and quantum resources within a given causal structure have been extensively analysed, starting with the derivation of several classical constraints and their quantum violations~\cite{Clauser1969, GHZ}, and progressing to a systematic analysis~\cite{Fritz2012,Fritz2013,Chaves2015,nonshan,Wolfe2016}.  Less work has been dedicated to understanding the limitations of quantum systems~\cite{Tsirelson1980, Chaves2015, VanHimbeeck2019} and of the behaviour of theories beyond.  For the latter, there have been analyses of the implications of the no-signalling principle on causal structures~\cite{Henson2014, Chaves2016}. More generally, understanding the differences of generalised probabilistic theories (GPTs) with respect to different tasks may inform the search for principles that single out quantum mechanics.
	
	In this work, we introduce a technique for deriving constraints on the observable correlations that are achievable in different causal structures according to different GPTs, with the aim of moving towards a systematic analysis of the differences between such theories.  Our approach is based on \emph{measurement entropy}~\cite{Short2010, Barnum2010} and inspired by entropic approaches to analysing causal structures involving classical and quantum resources~\cite{Braunstein1988, Steudel2015, Chaves2012, Fritz2013, Chaves2015, Pienaar2016, review}.  That such a generalisation is possible, was not at all clear, since work regarding the definition of entropy in GPTs showed that there is no entropy measure that retains the relevant properties of the von Neumann entropy~\cite{Short2010, Barnum2010}. In particular, the additivity of entropy under the composition of different systems is not retained by the proposed measures, which in previous entropic approaches for analysing causal structures was crucial for encoding causal constraints.
	
	One of the key points that allows us to overcome these issues is to explicitly include the conditional entropy in the analysis. Nevertheless, since our final results are stated in terms of the Shannon entropy they can be directly compared to those obtained with previous entropic techniques.
	
	We apply our method to various causal structures, generating a series of entropic constraints that exclude certain causal explanations of observed correlations when restricted by arbitrary GPTs. This allows us to compare different causal structures with respect to the correlations they allow in different theories (in particular, we compare classical, quantum and arbitrary GPTs).  In some cases we find the same sets of entropic description regardless of the theory (here known quantum inequalities also apply to GPTs), while in others we can show an entropic separation.  For instance, we apply our technique to Information Causality~\cite{Pawlowski2009}, a candidate principle for singling out quantum theory, showing that our method improves upon that of~\cite{Chaves2016}, yielding the stronger inequalities of~\cite{Short2010a}. Although the maximally non-local GPT, \emph{box-world} does not satisfy the notion of Information Causality, we identify minimal notions of causation that are satisfied.
	
	In addition to providing a method for analysing causal structures with GPT resources, we make technical contributions by showing that any set of achievable entropy vectors for the observed variables in a causal structure involving quantum or other generalised probabilistic resources is a convex cone. Previously this had only been shown for the entropy vectors of classical resources~\cite{Zhang1997, Chaves2012, Fritz2013, review}. This insight allows for easy comparison of the entropic sets within different theories, and in some cases enables us to prove that a given characterisation is complete by showing that all extremal points are achievable.  We also give some insights into the entropic analysis of quantum causal structures.

	\section{Preliminaries}
	For every system $A$ in a GPT, there is an associated state space $\sta_A$, a compact convex subset of a real vector space $V$ and an associated space of effects, $\eff_A$. An \emph{effect} $e\in\eff_A$ is a linear map $\sta_A\to[0,1]$ (thus, $e$ is a vector in the dual space to $V$). There is a special effect, $u_A\in\eff_A$, called the unit effect, with the property that $u_A(s)=1$ for all $s\in\sta_A$. A \emph{measurement} $M$ is a collection of effects whose sum is the unit effect, i.e., we can write $M=\{e^x\in\eff_A:\sum_xe^x=u_A\}$.  We use $\me_A$ to represent the set of allowed measurements on $A$. The interpretation of $e^x(s)$ is the probability of outcome $x$ when $M$ is performed on a system in state $s$.
	
	Consider two measurements on $A$: $M=\{e^x\}_{x\in\esp_M}$ and $N=\{f^y\}_{y\in\esp_N}$. If there exists a map $F:\esp_M\to\esp_N$ such that
	\begin{equation} \label{eq:finegrain}
	\sum_{x\in\esp_M:F(x)=y}e^x=f^{y}\ \forall y\in\esp_N
	\end{equation}
	we say that $M$ is a \emph{refinement} of $N$ (equivalently, $N$ is a \emph{coarse-graining} of $M$).\footnote{If $F$ is bijective,~\eqref{eq:finegrain} is a \emph{relabelling}. The set of measurements $\me_A$ is assumed to be closed under relabelling and coarse-graining.} A refinement is \emph{trivial} if for all $x\in\esp_M$, $e^x=c(x)\,f^{F(x)}$ for some $c(x)\in\mathbb{R}_{>0}$. The subset of \emph{fine-grained measurements}, $\me^*_A$, are those for which there are no non-trivial refinements. Throughout this article we restrict to GPTs where there is at least one finite-outcome fine-grained measurement (in classical and quantum theory this is a restriction to finite-dimensional systems).
	
	Transformations of systems are represented by linear maps between state spaces, $T:\sta_A\to\sta_B$ and the set of such transformations is denoted $\tra_{A\to B}$. The set $\tra_{A\to A}$, contains the identity transformation, $I_A$, and is closed under composition. Furthermore, a transformation followed by a measurement is a valid measurement.
	
	Two systems $A$ and $B$ can be thought of as parts of a single joint system $AB$.  We do not specify precisely what the joint state space is, but a minimal requirement is that if $s_A\in\sta_A$ and $s_B\in\sta_B$ then $s_A\ot s_B\in\sta_{AB}$.  States that can be written as $s_A\ot s_B$ are called \emph{product} and convex combinations thereof are \emph{separable}.  Analogously, measurements $M=\{e^x\}_{x\in\esp_M}\in\me_A$ and $N=\{f^y\}_{y\in\esp_N}\in\me_B$ can be composed into $L=M\ot N\in\me_{AB}$ with outcome set $\esp_L=\esp_M\times\esp_N$ and effects $g^{x,y}=e^x\ot f^y$. Product effects act on product states according to ${(e^x \ot f^y)(s_A \ot s_B)}=e^x(s_A)f^y(s_B)$.
	
	This implies that we have non-signalling theories: Suppose $\{e_a^x\}_x\in\me_A$ and $\{e_b^y\}_y\in\me_B$ are measurements for $a=1,\ldots,n_a$ and $b=1,\ldots,n_b$, then, for example,
	\begin{align*}
	p(y|a,b)
	=\sum_x(e_a^x\ot f_b^y)(s_{AB})=(u_A\ot f_b^y)(s_{AB})\,,
	\end{align*}
	which is independent of $a$.
	
	We also assume that there are well-defined reduced states: $\forall s_{AB} \in \sta_{AB} \ \exists s_A \in \sta_A \stah \forall e \in \eff_A, e(s_A)=(e \otimes u_B) s_{AB}$.  The post-measurement state on $A$ after a measurement on $B$ with outcome $x$ is
	\begin{equation} \label{eq:condstate}
	s_{A|x}= \frac{(I_A \otimes e^x)(s_{AB})}{e^x(s_B)}.
	\end{equation}

	If the system $A$ is classical, then $\sta_A$ is a simplex and (up to relabelling) there is only one fine-grained measurement that is not a trivial refinement of another fine-grained measurement. We call this a \emph{standard classical measurement}. Note that classical systems can be represented in any GPT and composing them maintains separability.
	
	\emph{Box world}~\cite{Barrett2007} is the GPT in which the joint state space of several systems is in one-to-one correspondence with the set of no-signalling distributions amongst those systems, i.e., its state space in this sense the largest possible within the framework.

	\section{Measurement entropy and its properties}
	The approach to analysing causal structures that we use in this work is based on measurement entropy. In this section we introduce this and outline some of its properties.
	
	Measurement entropy was first introduced in~\cite{Short2010a, Barnum2010}; we follow the exposition of~\cite{Short2010a} here. The \emph{measurement entropy}, $\hplo$, is the minimal Shannon entropy of the outcome distribution after a fine-grained measurement, i.e., for $s_A \in \sta_A$,
	\begin{equation}
	\hpl{A}=\inf_{\{e^x\}\in\me_A^*}-\sum_xe^x(s_A)\log_2 e^x(s_A) .
	\end{equation}
	Several ways to define the conditional measurement entropy have been proposed~\cite{Short2010a, Barnum2010}, of which we use the following~\cite{Short2010a}.  For any state $s_{AB}\in\sta_{AB}$ with reduced state $s_B\in\sta_B$, the \emph{conditional measurement entropy} is
	\begin{equation}
	\hpl{A|B}=\inf_{\{f^y\} \in \me_B}\sum_y f^y(s_B)\hpl{A_{|y}}\,,
	\end{equation} 
	where $\hpl{A_{|y}}$ is the entropy of the state on $A$ after a measurement on $B$ with outcome $y$, $s_{A|y}$.  For classical systems these entropies coincide with the Shannon entropy, $H$.
	
	The measurement entropy satisfies a list of properties that are useful to this work. Some of these have previously been derived in~\cite{Short2010a, Barnum2010}, others are new to this work.  In the remainder of this section, $\sta_A$, $\sta_{AB}$ etc.\ refer to state spaces within an arbitrary GPT.  For the proofs of the first two properties we refer to~\cite{Short2010a}.
	
	\begin{pty}[Positivity~\cite{Short2010a}]\label{it:1}
		$\hpl{A} \geq 0$ for all ${s_A \in \sta_A}$ and $\hpl{A|B} \geq 0$ for all $s_{AB} \in \sta_{AB}$.
	\end{pty}
	
	\begin{pty}[Reduction to Shannon entropy~\cite{Short2010a, Barnum2010}]\label{it:2}
		Let $A$ and $B$ be classical systems and $s_{AB} \in \sta_{AB}$, then $\hpl{A}=H(A)$ and $\hpl{A|B}=H(A|B)$.
	\end{pty}
	
	\begin{pty}[Data processing]\label{it:dpi}
		Let $s_{AB}\in\sta_{AB}$, $T\in\tra_{B\to C}$ and $s_{AC}=(I_A\ot T)(s_{AB})$.  Then ${\hpl{A|B} \leq \hpl{A|C}}$.
	\end{pty}
	\begin{proof}
		If $\{f^j\}$ form a measurement on $C$, then $\{g^j\}$ form a measurement on $B$, where $g^j:s_B\mapsto f^j(T(s_B))$. It follows that
		\begin{align*}
		\hpl{A|C}&=\inf_{\{f^j\}\in\me_C}\sum_j f^j(s_C)\hpl{A_{|j}}\\
		&=\inf_{\{f^j\}\in\me_C}\sum_j g^j(s_B)\hpl{A_{|j}}\\
		&\geq\inf_{\{g^j\}\in\me_B}\sum_j g^j(s_B) \hpl{A_{|j}}\\
		&=\hpl{A|B}\qedhere
		\end{align*}
	\end{proof}

	\begin{pty}[Independence]\label{it:ind}
		If two systems $A$ and $B$ are independent, i.e., $s_{AB}=s_A \otimes s_B$, then $\hpl{A|B}=\hpl{A}$.
	\end{pty}
	\begin{proof}
		If $A$ and $B$ are independent, after any measurement $\{e^j\}\in\me_B$ on $B$ the post-measurement state on $A$ is $s_A$, independent of the outcome of the measurement. Therefore $\hpl{A|B}=\hpl{A}$.
	\end{proof}
	
	\begin{pty}[Classical subsystem inequalities]\label{it:cs}
		For a joint state $s_{ABC} \in \sta_{ABC}$ with classical subsystems $A$ and $B$, $\hpl{AB|C} \geq \hpl{A|C}$.
	\end{pty}
	\begin{proof}
		For any measurement $\{f^j\}\in \me_C$ we have
		\begin{align*}
		\sum_jf^j(s_C)H(AB)_{s_{AB}^j}&\geq\sum_jf^j(s_C)H(A)_{s_A^j}\\
		&\geq\hpl{A|C}\,.
		\end{align*}
		Applying this to a sequence of measurements that converge to $\hpl{AB|C}$ establishes the claimed result.
	\end{proof}

	\begin{pty}[Subadditivity~\cite{Short2010a, Barnum2010}]\label{it:b1}
		In GPTs for which $M\in\me^*_A$ and $N\in\me^*_B$ imply $M\ot N\in\me^*_{AB}$ (which holds for locally tomographic theories, including box-world) ${\hpl{A}+\hpl{B}\geq\hpl{AB}}$.
	\end{pty}
	
	We refer to~\cite{Short2010a} or Appendix~\ref{app:general} for a proof of Property~\ref{it:b1}. Note further that Property~\ref{it:b1} is the only one that does not hold in arbitrary GPTs.

	\begin{pty}[Lemma~C2 of~\cite{Short2010a}]\label{it:b2}
		For $s_{ABC} \in \sta_{ABC}$ where $A$ and $B$ are classical systems, $\hpl{A|BC} \geq \hpl{AB|C}-\hpl{B}$.
	\end{pty}
	
	We refer to~\cite{Short2010a} or Appendix~\ref{app:general} for a proof of Property~\ref{it:b2}.

	\begin{pty}\label{it:b4}
		For $s_{ABC} \in \sta_{ABC}$, where $B$ is a classical subsystem, ${\hpl{A|BC} \leq \hpl{AB|C} - \hpl{B|C}}$. If $C$ is also classical then this holds with equality.
	\end{pty}
	
	We prove this property in Appendix~\ref{app:general}. Note that Properties~\ref{it:b2} and~\ref{it:b4} are both relaxations of the chain rule, $H(A|BC)=H(AB|C)-H(B|C)$, that holds for Shannon and von Neumann entropy.
	
	\section{Entropy vector method for causal structures in GPTs}
	
	A \emph{causal structure} is a set of nodes arranged in a directed acyclic graph, some of which are labelled observed.  Each observed node has a corresponding random variable, while the other, unobserved nodes correspond to resources from a GPT. For a causal structure $C$ we use $C^\rC$, $C^\rQ$, $C^\rB$ or $C^\rG$ depending on whether the resources are classical, quantum, box-world systems or from some unspecified GPT respectively. For each unobserved node we associate a subsystem with each of its outgoing edges. An example for this is displayed in Figure~\ref{fig:instrumental}.
	
	A direct arrow from a node $A$ in a causal structure to a node $Z$ means that $A$ is a \emph{parent} of $Z$; a directed path from $A$ to $Z$ means that $A$ is an \emph{ancestor} of $Z$. For an unobserved node $A$, all subsystems associated with its outgoing edges are considered parents/ancestors of each its children/descendants. Given a causal structure, a \emph{coexisting set} of systems~\cite{Chaves2015,review} is one for which a joint state can be defined. In general, no coexisting set includes all nodes, since there is no joint state of a system and the output obtained from a measurement on it (unless the system is classical).
	
	Our method to generate new inequalities for causal structures with GPT resources begins by considering an \emph{entropy vector} whose components are the entropies and conditional entropies of all coexisting sets. Conditional entropies composed entirely of classical subsystems are excluded because they are linear combinations of other entropies (e.g., $H(X|Y)=H(XY)-H(Y)$).
	
	We then impose a system of linear (in)equalities that are necessary for a vector to be realisable as an entropy vector in a causal structure. These inequalities are constructed using the properties of the measurement entropy explained earlier and strong subadditivity in the cases where the measurement entropy reduces to the Shannon entropy.  In the case of locally-tomographic GPTs, such as box-world, there is one additional property (Property~\ref{it:b1}) that does not hold in all GPTs.  Further constraints come from the causal structure: two sets of nodes are independent if they do not share any ancestors in the causal structure. In general, there may be further independencies among the observed variables (see Theorem~22(i) of~\cite{Henson2014} and Appendix~\ref{app:quantum_entropy_vectors}).  This system of inequalities constrains a polyhedral cone, which can be projected to a \emph{marginal cone} that contains no components involving unobserved systems. The projection is performed with a Fourier-Motzkin elimination algorithm~\cite{Williams1986}. An example that illustrates this procedure in detail is provided at the beginning of Section~\ref{sec:applications1}.
	
	When dealing with causal structures for which computing all entropy inequalities for the marginal scenarios of interest is computationally impractical or even not possible with the computational resources at hand, due to the scaling of Fourier-Motzkin elimination~\cite{Monniaux2010}, we can still derive valid entropy inequalities by marginalising subsets of all valid inequalities.
	
	Furthermore, given a particular observed distribution that we suspect to be incompatible with a causal structure (either for classical theory, quantum theory or boxworld), it is not necessary to go through the marginalisation procedure discussed here. Instead we can look for a certificate of incompatibility using a linear program. This program can be set up by computing the entropy vector for the distribution in question and then adding this as a list of equalities (one for each of its components) to the list of valid entropy inequalities for the causal structure. If the resulting system of linear (in)equalities is infeasible, then the distribution in question is certified as incompatible with the causal structure within the theory under consideration.
	
	For some causal structures, the entropic constraints derived using Property~\ref{it:b1} are also valid for GPTs that are not locally tomographic, which is the content of the following proposition.
	
	\begin{prop} \label{prop:generalGPT} Let $C$ be a causal structure in which there are no nodes with two or more unobserved parents.  For any GPT $G$, any correlations achievable with a finite number of finite-outcome measurements in $C^\rG$ are achievable in $C^\rB$.
	\end{prop}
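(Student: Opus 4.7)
The plan is to exploit two facts: first, the state space of box-world on a composite system is by construction in one-to-one correspondence with the non-signalling polytope, so any no-signalling joint distribution on a collection of outputs can be realised as measurement statistics on a single box-world source; second, the structural hypothesis forces the observed distribution in $C^\rG$ to decompose as an independent product of no-signalling distributions, each of which box-world can therefore reproduce directly.

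To make the second fact precise, I would first note that the hypothesis ``no node has two or more unobserved parents'' forces the subgraph induced by the unobserved nodes to be a forest, so every observed node $v$ has at most one root unobserved ancestor $R(v)$. The observed nodes therefore partition into groups $\{G_R\}$ indexed by these roots, together with one group of nodes having no unobserved ancestor. Within a single group $G_R$, each member receives, through a chain of unobserved transformations, its own subsystem of $R$, disjoint from the subsystems routed to the other members. The non-signalling calculation reproduced in the preliminaries, applied to the tensor product of these disjoint subsystems, then shows that the joint distribution of the outputs in $G_R$ conditioned on the observed inputs is no-signalling. Distinct roots have no common ancestors in $C$, so the associated outputs are independent in any GPT.

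Next I would construct the box-world realisation in $C^\rB$ as follows. For each unobserved root $R$ I would place a box-world state whose measurement statistics on the subsystems destined for its observed descendants reproduce the no-signalling distribution of $G_R$ computed in the previous step; such a state exists by the first fact above. For each intermediate unobserved node I would use the identity transformation, so that each outgoing subsystem of $R$ is passed unchanged along the chain to the appropriate observed descendant. At each observed node $v$ I would keep the same measurement (on the subsystem arriving from $R(v)$, together with its observed parents) that was performed in $C^\rG$. Taking the product over the independent roots then reproduces the full observed distribution in $C^\rB$.

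The step I expect to be most delicate is handling chains of unobserved nodes, in which an unobserved node has both an unobserved parent and observed children: one must verify that the box-world transformations, together with the outgoing-edge subsystem decomposition described at the start of Section~4, produce exactly the prescribed no-signalling marginal on each observed descendant. I would handle this by induction along each unobserved tree from the root outward, checking at every step that the accumulated marginal on the already-placed observed descendants agrees with the distribution extracted from the original $C^\rG$ realisation, so that the box-world construction ultimately yields the full observed distribution of $C^\rG$.
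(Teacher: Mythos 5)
Your overall strategy is the same as the paper's: reduce the observed correlations to no-signalling conditional distributions associated with the independent unobserved sources, and use the defining property of box-world that every no-signalling distribution is realisable there. However, there is a genuine gap at the point where you invoke ``the non-signalling calculation reproduced in the preliminaries.'' That calculation applies to \emph{product} measurements $e_a^x\ot f_b^y$, i.e.\ to local measurements on the GPT subsystems indexed by classical settings. At an observed node $v$ with both observed parents and a GPT subsystem as parent, the operation producing $X_v$ is a priori a \emph{joint} measurement on the classical composite of the observed parents together with the GPT subsystem, and it is not automatic that this decomposes as ``read the classical values, then perform a measurement on the GPT subsystem chosen according to those values.'' Establishing exactly this decomposition (Lemma~2 of Cadney--Linden, restated as Lemma~\ref{lem:classfirst}) is the key step of the paper's proof, and it is where the hypothesis that no node has two or more unobserved parents does its real work: for two GPT parents the analogous decomposition fails (e.g.\ a Bell-basis measurement in quantum theory), which is precisely why the proposition breaks down in that case. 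Your argument uses the hypothesis only to obtain the forest/disjoint-subsystem structure, so it never engages with the actual obstruction.

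A secondary problem is the claim that ``distinct roots have no common ancestors in $C$, so the associated outputs are independent.'' A node in one group can have an \emph{observed} parent lying in another group (the hypothesis only restricts unobserved parents), so the output distributions of the groups are generally correlated and the observed distribution is not a product over roots. What is true is that the GPT \emph{resources} at distinct roots are independent, so the conditional distributions (outputs given settings) factorise, and the observed distribution is recovered by wiring outputs of some nodes into the settings of others in topological order. The construction you describe (keeping the same measurement at each node, with its observed parents as inputs) would in fact realise this wiring, but the justification as written---independence of the outputs and ``taking the product''---is not correct. Relatedly, an observed node can have several root unobserved \emph{ancestors} (reached through observed intermediaries); what is unique under the hypothesis is the root above its single unobserved \emph{parent}, which is what your grouping should be based on.
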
 
	
	This proposition follows from the insight presented in the proof of Lemma~2 in~\cite{Cadney2012a} that any joint measurement on a classical and a GPT system can be written as a measurement on the classical system followed by an outcome-dependent measurement on the other (see also Lemma~\ref{lem:classfirst} in Appendix~\ref{app:general}).
	
	\begin{proof}
		Since each node has at most one unobserved parent, by Lemma~\ref{lem:classfirst} at each node we can assume a standard classical measurement on the classical subsystems followed by a measurement on the GPT subsystem depending on the result.  Consider then $s_{A_1A_2\ldots}\in\sta^\rG_{A_1A_2\ldots}$ and let $\{e^x_a\}_x\in\me_{A_1}$ for $a\in\{1,2,\ldots,m_a\}$, $\{f^y_b\}_y\in\me_{A_2}$ for $b\in\{1,2,\ldots,m_b\}$ etc. Since the outcome distribution $p(x,y,\ldots|a,b,\ldots)$ is no-signalling and since (by definition) all no-signalling distributions can be realised by states in box-world, there exists $s'_{A_1A_2\ldots}\in\sta^\rB_{A_1A_2\ldots}$ and box-world measurements on $A_1$, $A_2$, $\ldots$ that give rise to the same correlations.
	\end{proof}
	
	Note that the same argument does not hold if there are multiple unobserved parents at a single node.  This is because some joint measurements cannot be expressed as a measurement on one system followed by a measurement on the other conditioned on the first (cf.\ Lemma~\ref{lem:classfirst}), for example a measurement in the Bell basis in quantum mechanics.
	
	The method presented in this section recovers previous entropic approaches for describing classical~\cite{Chaves2012,Fritz2013} and quantum~\cite{Chaves2015} causal structures as special cases. In the classical case, the measurement entropy and its conditional version coincide with the Shannon entropy and all variables in a causal structure (observed and unobserved ones) coexist. In this case, our method is equivalent to that of~\cite{Chaves2012,Fritz2013}.\footnote{In this case Property~\ref{it:cs} imposes that the entropy is strongly subadditive.} For the quantum case, the recovery of the method proposed in~\cite{Chaves2015} from ours is less obvious and is explained in Section~\ref{sec:quantum}.
	
	When considering different causal structures, convexity of the sets of achievable entropy vectors of the observed variables is useful for their comparison: for instance, it allows us to prove that the achievable entropies in one case are contained in those of another by considering only the extreme points. The following theorem (proven in Appendix~\ref{app:generalGPT}) establishes convexity in general and is therefore an important structural insight.
	\begin{thm} \label{thm:convexity}
		For any causal structure $C^\rG$ the closure of the set of achievable entropy vectors of the observed variables is a convex cone.
	\end{thm}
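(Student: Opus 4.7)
The plan is to verify that the closure $\overline{T}$ of the set $T$ of achievable entropy vectors of the observed variables satisfies the two defining properties of a convex cone: closure under addition and under non-negative scalar multiplication.

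For addition, the trivial protocol (taking the single unit effect $u_A$ as the sole measurement outcome at each observed node) yields the zero vector, so $0\in T$. Given two protocols in $C^{\rG}$ achieving $v_1,v_2\in T$, I would form their tensor product: the product state on the combined unobserved GPT system (valid by the minimal joint-state axiom recalled in the Preliminaries) together with product measurements at each observed node (valid by the composition axiom). Since the two copies are independent and the observed variables are classical outcomes, Shannon entropy is additive on the resulting pairs of observed random variables, so the composite achieves $v_1+v_2$. Thus $T+T\subseteq T$, which extends to $\overline{T}+\overline{T}\subseteq\overline{T}$ by continuity; iterating gives $nv\in T$ for all $v\in T$ and $n\in\N$.

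For non-negative scaling, the key observation is that once $\overline{T}$ is shown to be convex, the cone property follows automatically: for any $v\in T$ and $\alpha\in[0,n]$ with $n\in\N$, the identity $\alpha v=(\alpha/n)\,nv+(1-\alpha/n)\,0$ expresses $\alpha v$ as a convex combination of $nv\in T$ and $0\in T$, and sending $n\to\infty$ reaches all $\alpha\geq 0$. Convexity of $\overline{T}$ I would establish via a block construction: for $v_1,v_2\in T$ and rational $\lambda=p/q\in(0,1)$, the $q$-block protocol formed from $p$ copies of one protocol and $q-p$ copies of the other achieves $q[\lambda v_1+(1-\lambda)v_2]\in T$. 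The delicate remaining step is to upgrade this from ``$q$ times the convex combination lies in $T$'' to ``the convex combination itself lies in $\overline{T}$''; here I would exploit the convex state space $\sta_A$ and continuous families of measurements in the GPT, together with the continuity of Shannon entropy in the distribution of the observed outcomes, to construct single-copy protocols whose entropy vectors approximate $\lambda v_1+(1-\lambda)v_2$ directly. A density argument in $\lambda$ then extends to arbitrary real convex combinations.

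The main obstacle is precisely this gap-filling step: tensor-product combinatorics alone deliver only the discrete semigroup $\{nu:n\in\N\}$ of multiples of a given achievable vector $u$, whose closure is not a convex cone. The richness needed to make the closed convex hull of $T$ coincide with $\overline{T}$ must come from the continuously parameterized GPT data (states and measurements) and the continuity of Shannon entropy for the classical observed outcomes, without which the argument would fail. Once $\overline{T}$ has been shown convex, closure under addition together with the integer-scaling identity above yields the full convex-cone structure.
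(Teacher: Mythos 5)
Your setup is sound up to the step you yourself flag as ``delicate,'' but that step is exactly where the proof lives, and the route you sketch for it does not work. The additivity part ($0\in T$ and $T+T\subseteq T$ via independent parallel composition of strategies, hence $nv\in T$) is correct, as is the reduction of non-negative scaling to convexity. However, convexity of $\overline{T}$ cannot be obtained by ``exploiting the convex state space $\sta_A$ and continuous families of measurements together with continuity of Shannon entropy.'' Measuring the mixed state $\lambda s_1+(1-\lambda)s_2$ yields the mixed outcome distribution $\lambda P_1+(1-\lambda)P_2$, whose entropy vector is \emph{not} $\lambda v_1+(1-\lambda)v_2$: single entropies move the wrong way by concavity, and joint entropies of variables at different nodes acquire spurious correlations because the ``which protocol'' information is shared between the nodes but not recorded anywhere. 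Continuity of $H$ likewise does not let you pass from $q\bigl[\lambda v_1+(1-\lambda)v_2\bigr]\in T$ to the convex combination itself, since nothing shows $T$ is closed under division by $q$. So the gap is genuine and the proposed filler fails; without a new idea your argument only delivers the additive semigroup generated by $T$.

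The paper closes this gap with two ingredients that are absent from your proposal. First, a \emph{coherent time-sharing flag}: a classical variable $A$ that reaches every observed node and selects which of the two protocols is run, so that conditioned on $A$ the observed variables follow exactly the first or exactly the second distribution. Since in a general causal structure the observed nodes need not share an ancestor through which such a flag could be routed, this requires the augmentation argument of Lemma~\ref{lem:2}: add observed parents $A_i$ to the parentless observed nodes, wire them to a common root, and recover the original structure as the linear slice $I(A_1:A_i)=0$ of the augmented one, using that linear constraints and coordinate projections preserve convexity. Second, \emph{entropy dilution} (Lemma~\ref{lem:1}): the flag triggers protocol $1$ with probability $p/k$, protocol $2$ with probability $(1-p)/k$, and a trivial output otherwise, but when a protocol is triggered it is run $k$ times i.i.d.; then $H'_k(S)=H'_k(A)+pH^1(S)+(1-p)H^2(S)$ for every subset $S$ of observed variables, and $H'_k(A)\to 0$ as $k\to\infty$, so the convex combination is reached in the closure. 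Your block construction is the right first move, but these two further steps are what actually turn it into a proof.
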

	
	\subsection{Causal structures and post-selection}
	For a causal structure involving a parentless observed node $X$ that takes values $1,2,\ldots,n$, we can also analyse an adapted causal structure where each descendant of $X$ is split into $n$-variables and $X$ is dropped, e.g., a descendant $Y$ is split into $Y_{|X=1}, \ldots ,Y_{|X=n}$. The resulting causal structure is said to be \emph{post-selected} on $X$ (see Figure~\ref{fig:infcaus} for an example, and~\cite{Chaves2015,review} for further details of this procedure). For some causal structures, $C$, post-selection is necessary for deriving entropy inequalities that distinguish between $C^\rC$, $C^\rQ$ and $C^\rB$~\cite{Braunstein1988,Chaves2015,linelike}. When post-selecting on parentless nodes, convexity of the set of achievable entropy vectors of the observed variables also holds by the following corollary of Theorem~\ref{thm:convexity}, which is also proven in Appendix~\ref{app:generalGPT}.
	
	\begin{cor} \label{cor:convexity}
		For any causal structure $C^\rG$ in which one or more nodes have been split into alternatives by post-selecting on parentless observed nodes, the closure of the set of achievable entropy vectors for the coexisting observed variables is a convex cone.
	\end{cor}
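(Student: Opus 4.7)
The plan is to adapt the construction used in the proof of Theorem~\ref{thm:convexity} so that it remains valid in the post-selected setting. Suppose $X$ is the parentless observed node being post-selected on (with values $1, \ldots, n$), and let $v_1$ and $v_2$ be two entropy vectors achievable for the coexisting observed variables after post-selection, realized respectively by GPT configurations $R_1$ and $R_2$ of $C^G$.

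The main step would be to produce a realization $R$ whose post-selected entropy vector equals $v_1 + v_2$ coordinate-wise. I would take the unobserved systems of $R$ to be tensor products of the corresponding systems in $R_1$ and $R_2$, and at each observed node of $R$ the measurement to be the product of the two individual measurements, with the setting dictated by the single shared value of $X$ (this is coherent because $X$ is parentless and merely labels a measurement choice). Conditional on $X=i$, the tensor factors remain independent, so the observed outcomes in each post-selected branch factorise into independent pairs and every entropy and conditional entropy in the post-selected vector decomposes as a sum. Hence $v_1 + v_2$ lies in the achievable set.

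Closure under scaling $v \mapsto Nv$ for positive integers $N$ follows analogously, by taking $N$ independent tensor copies of a single realization and re-using the same $X$ across copies. Combining additive closure with integer scaling gives every positive rational combination $\lambda v_1 + (1-\lambda)v_2$ as an achievable vector (after rescaling by the common denominator, which is legitimate within a cone), and taking the closure extends this to arbitrary $\lambda \in [0,1]$. Together these properties yield the closed convex cone structure.

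The main obstacle I would expect is verifying in detail that the product-measurement construction with $X$ shared across tensor factors gives a bona fide realization of the post-selected scenario, i.e.\ that it is compatible with the minimal tensor-product assumptions on joint state spaces and measurements from the Preliminaries, and that post-selection on parentless nodes commutes with the tensoring operation. Since the same verification underlies the proof of Theorem~\ref{thm:convexity}, I expect the corollary to reduce essentially to repackaging that argument in the post-selected language, with the sole novelty being the bookkeeping required to track the split descendants $Y_{|X=1}, \ldots, Y_{|X=n}$ through the tensor-product construction.
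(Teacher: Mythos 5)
There is a genuine gap, and it is at the heart of the statement. Your tensor-product construction correctly shows that the achievable set is closed under addition of entropy vectors (and hence under multiplication by positive integers), but that is strictly weaker than convexity of the closure: a set closed under addition and positive-integer scaling need not have a convex closure --- the nonnegative integer lattice is closed under both, is topologically closed, and is not convex. The step ``rescaling by the common denominator, which is legitimate within a cone'' is circular: from additivity and integer scaling you obtain $p\,v_1+(q-p)\,v_2$ for $\lambda=p/q$, and to pass to $\lambda v_1+(1-\lambda)v_2$ you must divide by $q$, i.e.\ scale \emph{down}, which is precisely the cone property you are trying to establish. The missing idea, which is the core of the proof of Theorem~\ref{thm:convexity} (Lemmas~\ref{lem:1} and~\ref{lem:2}), is a mixture rather than a product: a shared flag $A$ takes value $1$ with probability $p/k$, value $2$ with probability $(1-p)/k$ and a trivial value otherwise, and selects which of the two strategies is executed, each run $k$ times on independent copies of the resources. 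Every joint entropy of the observed variables then equals $H(A)+pH^1(\cdot)+(1-p)H^2(\cdot)$, and since $H(A)\to 0$ as $k\to\infty$ the convex combination is reached in the closure. When the observed nodes do not all share an observed ancestor, the auxiliary-parent argument of Lemma~\ref{lem:2} is additionally needed to distribute the flag consistently.

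For the corollary specifically there is a further point your sketch leaves untouched: in the post-selected structure the entropy vector is a concatenation of the entropy vectors of the $k$ alternative branches, with the components belonging to observed variables that are not descendants of the post-selected node identified across branches. Whatever mixing construction one uses must therefore be applied to all branches simultaneously and must induce the \emph{same} marginal distribution on those shared variables in every branch, so that the identified components remain consistent. The paper handles this by applying the Lemma~\ref{lem:2} construction to the causal structure containing a single alternative and observing that the resulting distribution on the non-descendants is branch-independent; your proposal would need an analogous check once the additivity argument is replaced by a genuine convex-combination argument.
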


	\section{Quantum causal structures}\label{sec:quantum}
	For quantum causal structures there is an entropy vector method in which the components are the unconditional von Neumann entropies, $H$, of all coexisting sets~\cite{Chaves2015}. Conditional entropies are not explicitly included, but, because $H(A|B)=H(AB)-H(B)$, relations involving conditional entropy can still be encoded. It is natural to ask whether the technique introduced earlier in this paper could yield different results to the existing entropic approach to quantum causal structures. In this section, we consider this question.
	
	Following the approach outlined earlier in the paper we consider including all entropy inequalities from~\cite{Chaves2015} (see Appendix~\ref{app:quantum} for a full description of the method employed in~\cite{Chaves2015}). These are automatically part of our approach, since the unconditional measurement entropy---for which we include all inequalities valid in the theory at hand---coincides with the von Neumann entropy in the quantum case. In addition, we also take into account inequalities for conditional measurement entropies, which are always positive and differ from the conditional von Neumann entropy. Thus, our approach could lead to more restrictive inequalities than the previous quantum one. With the following proposition we show that the previous method for quantum causal structures~\cite{Chaves2015} can be refined in a way that makes the additional variables corresponding to conditional measurement entropies superfluous.
	
	\begin{prop}\label{lem:quantumlemma} Consider a causal structure
		$C^\rQ$ and suppose that, in addition to any causal constraints, we use positivity of unconditional entropies, strong subadditivity and additionally impose positivity of conditional entropies for all combinations of variables that occur in a co-existing set.  If we then eliminate all variables corresponding to unobserved systems, the resulting entropic inequalities for the observed variables are all valid.
	\end{prop}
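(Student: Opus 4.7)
The aim is to show that whenever an observed entropy vector $\vec h_{\mathrm{obs}}$ arises from a realization of $C^\rQ$, there is an extension of $\vec h_{\mathrm{obs}}$ to entropies of every coexisting set that simultaneously satisfies positivity, strong subadditivity, the causal constraints of $C^\rQ$, and $H(A\mid B)\geq 0$ for every $A,B$ in a common coexisting set. Once this is established, $\vec h_{\mathrm{obs}}$ lies in the projection of the constraint cone onto the observed components, and so every inequality obtained by Fourier--Motzkin elimination of the unobserved variables is a valid constraint on observed entropies.

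My construction is to augment the given realization by tensoring each quantum subsystem $Q_i$ carried on an outgoing edge of an unobserved node with a maximally mixed ancilla $R_i$ of a common dimension $d$, and declaring that all downstream measurements and transformations act only on the $Q_i$ part while carrying $R_i$ passively. This is a valid quantum realization of the same causal structure, and since the measurements are unchanged it produces the same observed statistics, so $\vec h_{\mathrm{obs}}$ is unaffected. Positivity of unconditional entropies is trivial, strong subadditivity holds because it is a theorem about every quantum state, and the causal independence constraints are preserved because the ancillas sit as tensor factors on their respective subsystems and therefore do not correlate coexisting sets that were originally independent.

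The substantive step is the newly imposed positivity of conditional entropies. A direct calculation using the product form of the ancillas gives, for disjoint $A,B$ in a common coexisting set,
\begin{equation*}
H(A\mid B)_{\tilde\rho}=H(A\mid B)_\rho+k_A\log d,
\end{equation*}
where $k_A$ is the number of unobserved quantum subsystems appearing in $A$. If $k_A\geq 1$, the Araki--Lieb bound $H(A\mid B)_\rho\geq -\log d_A$ lets me choose $d$ large enough that the right-hand side is non-negative. If $k_A=0$ then $A$ is purely observed and hence classical, in which case $H(A\mid B)_\rho\geq 0$ holds automatically, because the conditional von Neumann entropy of a classical system given any quantum system is non-negative. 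Only finitely many coexisting sets and partitions arise, so a single $d$ works uniformly. I expect the main obstacle to be the consistency check that inflating the ancillas does not break strong subadditivity or the causal independence constraints; this works because each ancilla contribution appears symmetrically on both sides of those constraints and therefore cancels there, whereas in the one-sided conditional entropy expression it contributes with a definite positive sign.
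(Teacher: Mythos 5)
Your construction is essentially the paper's: the paper likewise pads each unobserved subsystem $Y_i^j$ with an independent, passively carried ancilla (there, a classical state with entropy matched to $H(Y_i^j)$, added only when needed), which leaves the observed statistics and all other constraints intact while forcing the relevant conditional entropies to be non-negative. The only difference is in how positivity is propagated from single subsystems to larger coexisting sets --- the paper uses $H(Y_i^j|S)\geq -H(Y_i^j)$ and then an induction on set cardinality via strong subadditivity and the chain rule, whereas you use a uniform ancilla dimension $d$ together with the Araki--Lieb dimension bound directly --- but both routes are sound.
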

	
	As a result, all valid inequalities for conditional measurement entropy can be imposed for the conditional von Neumann entropy instead, and can hence be encoded as linear constraints on the (unconditional) von Neumann entropy. In other words, although the conditional von Neumann entropy can be negative for some quantum states, by constraining it to be positive and eliminating unobserved variables, we obtain valid entropy inequalities for the observed variables in $C^\rQ$. This was not used in previous entropic analyses of quantum causal structures~\cite{Chaves2015}.
	
	Previous quantum methods~\cite{Chaves2015,review} instead analysed quantum causal structures by considering the von Neumann entropy of coexisting sets and imposing positivity of the entropy, strong-subadditivity as well as the weak monotonicity constraints that for any state $\rho_{XYZ}$, $H(X|Y)+H(X|Z) \geq 0$~\cite{Pippenger}.  Weak monotonicity constraints are not needed in the statement of Proposition~\ref{lem:quantumlemma} because they are implied by the positivity of conditional entropies. 
	See Appendix~\ref{app:quantumlemma} for the full proof of Proposition~\ref{lem:quantumlemma} and for a complete account of previous quantum methods.
	
	This also gives an important insight into the entropy vector method: the difference between the inequalities that result from using the entropy vector method in the classical~\cite{Chaves2012,Fritz2013} and quantum~\cite{Chaves2015} cases is entirely due to the fact that in the quantum case not all variables coexist and does not arise from the different properties of the Shannon and von Neumann entropy (see also Appendix~\ref{app:quantumlemma} for further discussion).
	
	For most causal structures of interest we can prove that our refined entropy vector method does not allow us to find any tighter entropy inequalities than that of~\cite{Chaves2015} (see Lemma~\ref{lem:different_quant} in Appendix~\ref{app:quantum_entropy_vectors}). Nonetheless, the possibility of using positivity of conditional entropy instead of weak monotonicity simplifies the quantum method even in these cases.

	\section{Applications: Entropic characterisation of causal structures with GPT resources}
	In this section we illustrate the techniques introduced above through a series of examples that are chosen to show the different types of result that can occur.  We summarize our findings here, before presenting these cases in detail.
	
	First, in Section~\ref{sec:applications1}, we consider the case without post-selection and give three types of example:
	\begin{itemize}
		\item A causal structure in which the actual entropic cones for classical, quantum and GPTs are provably the same is in given in Section~\ref{sec:inst}.
		\item A causal structure in which our methods lead to the same outer approximations to the entropy cones in the classical, quantum and GPTs cases is given in Section~\ref{sec:str5}.
		\item Two causal structures for which our methods yield the same outer approximations of the entropy cones in the classical and quantum cases, but a different outer approximation is obtained for GPTs is given in Sections~\ref{sec:str5adapta} and~\ref{sec:str5adaptb}. In these cases we are not aware of any GPT correlations that violate the classical/quantum inequalities.
	\end{itemize}
	Note that we already know of a case (the triangle causal structure) where we have different outer approximations of the entropy cones in the classical and quantum cases~\cite{nonshan}.  We are currently unaware of a case without post-selection in which the outer approximations of the entropy cones between two theories are different and in which we know of correlations in one theory that violate those in another. In other words, it remains possible that the gaps we find in the outer approximations are a symptom of the method used, rather than features of the actual entropy cones.
	
	Then, in Section~\ref{sec:applications2} we move to the use of post-selection and show that:
	\begin{itemize}
		\item For the bilocality causal structure there are different entropy cones for classical, quantum and locally tomographic GPTs (Section~\ref{sec:biloc}). We also identify distributions that certify that the true entropy cones are indeed different for all three cases.
		\item We can use our methods to obtain entropic inequalities for the Information Causality scenario in arbitrary GPTs (Section~\ref{sec:IC}).
	\end{itemize}

	\subsection{Analysis of causal structures without post-selection} \label{sec:applications1}
	In this section we illustrate that our method for GPTs can recover the classically valid entropy inequalities for some causal structures (the first two examples), and that for other causal structures we recover different inequalities (the last two examples). 
	
	\subsubsection{Instrumental Scenario}\label{sec:inst}
	In the classical literature, one well-studied causal structure is the \emph{instrumental causal structure}~\cite{Pearl1995} of Figure~\ref{fig:instrumental}. The quantum version of this causal structure has also been studied~\cite{Chaves2018,VanHimbeeck2019}.
	\begin{figure}
		\centering
		\resizebox{0.5 \columnwidth}{!}{%
			\begin{tikzpicture}[scale=0.8]
			\node[draw=black,circle,scale=0.75]  (Z1) at (-4.5,-2) {$Y$};
			\node[draw=black,circle,scale=0.75]  (Y1) at (-7,-2) {$Z$};
			\node (Z2) at (-4.75,-1.1) {$A_Y$};
			\node (Y2) at (-6.75,-1.1) {$A_Z$};
			\node[draw=black,circle,scale=0.75]  (X1) at (-9.5,-2) {$X$};
			\node (A1) at (-5.75,-0.7) {$A$};
			\draw [->,>=stealth] (A1)--(Y1);
			\draw [->,>=stealth] (A1)--(Z1);
			\draw [->,>=stealth] (Y1)--(Z1);
			\draw [->,>=stealth] (X1)--(Y1);
			\end{tikzpicture}
		}%
		\caption[.] {The instrumental causal structure. The nodes labelled $X$, $Y$ and $Z$ correspond to observations, modelled as random variables. The node $A$ labels a resource-system with subsystems $A_Y$ and $A_Z$ associated to its outgoing edges.}
		\label{fig:instrumental}
	\end{figure}
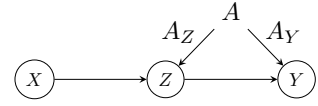
	
	In this case, the coexisting sets are all subsets of $\left\{A_Y, A_Z, X \right\}$, $\left\{A_Y, X , Z \right\}$ and $\left\{X, Y, Z \right\}$.  The second set implies that the entropy vector includes components corresponding to {\footnotesize $\hpl{A_Y}$, $\hpl{X}$, $\hpl{Z}$, $\hpl{A_YX}$, $\hpl{A_YZ}$, $\hpl{XZ}$, $\hpl{A_YXZ}$, $\hpl{A_Y|X}$, $\hpl{A_Y|Z}$, $\hpl{A_Y|XZ}$, $\hpl{A_YX|Z}$, $\hpl{A_YZ|X}$, $\hpl{X|A_Y}$, $\hpl{X|A_YZ}$, $\hpl{XZ|A_Y}$, $\hpl{Z|A_Y}$, $\hpl{Z|A_YX}$}, and similarly for the other two, leading to a vector with 35 components.
	
	For these we impose all entropy inequalities of Properties~1 to 8 as well as the independencies of the subsystems of $A$ and $X$, such as $\hpl{A_YA_Z|X}=\hpl{A_YA_Z}$ or $\hpl{X|A_YA_Z}=\hpl{X}$.
	
	Eliminating all other variables in order to obtain inequalities that only involve the components $(H(X)$, $H(Y)$, $H(Z)$, $H(XY)$, $H(XZ)$, $H(YZ)$, $H(XYZ) )$, we obtain the \emph{Shannon inequalities} (positivity of entropy and conditional entropy, and strong subadditivity) for three variables and
	\begin{equation} \label{eq:inst}
	I(X:YZ) \leq H(Z)\,,
	\end{equation}
	which form a polyhedral cone $\Gamma$. Valid entropy vectors for distributions compatible with the instrumental scenario for a system $A$ of some locally tomographic GPT are necessarily within $\Gamma$.  For this causal structure, it is known that being in $\Gamma$ is necessary and sufficient for being in the closure of the set of valid entropy vectors when $A$ is classical or quantum~\cite{Chaves2014, Henson2014, nonshan}.  Since classical systems are a special case of systems in a GPT, it follows that membership of $\Gamma$ is also sufficient for locally tomographic GPTs.  We have hence found all valid entropy inequalities in this scenario for such theories.
	
	According to Proposition~\ref{prop:generalGPT},~\eqref{eq:inst} holds in any (not necessarily locally tomographic) GPT\footnote{It is already clear that it holds in any GPT satisfying the premise of Property~\ref{it:b1}. Note that Proposition~\ref{prop:generalGPT} implies that the set of correlations in $C^\rB$ is at least as large as those in $C^\rG$, so any valid restriction that holds for correlations in $C^\rB$ holds also for correlations in $C^\rG$.}. Thus, in the instrumental causal structure, $\Gamma$ completely characterises the set of achievable entropy vectors independently of whether system $A$ is classical, quantum or any GPT system.
	
	\subsubsection{Causal structure of Figure~\ref{fig:str5}}\label{sec:str5}
	Applied to the causal structure of Figure~\ref{fig:str5}, our method
	leads to the following entropy inequalities for the observed variables
	when $A$ and $B$ are taken to be box-world systems: 
	\begin{align}
	I(C:E|D)&=0 \\
	I(C:DEF)&\leq H(D)\\
	I(C:EF)&\leq H(E)
	\end{align} 
	and the Shannon inequalities.
	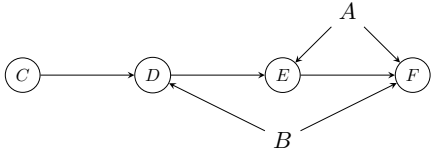
\begin{figure}
		\centering
		\resizebox{0.7 \columnwidth}{!}{%
			\begin{tikzpicture}[scale=0.85]
			\node[draw=black,circle,scale=0.75]  (F) at (-2,-2) {$F$};
			\node[draw=black,circle,scale=0.75]  (E) at (-4.5,-2) {$E$};
			\node[draw=black,circle,scale=0.75]  (D) at (-7,-2) {$D$};
			\node[draw=black,circle,scale=0.75]  (C) at (-9.5,-2) {$C$};
			\node (A) at (-3.25,-0.75) {$A$};
			\node (B) at (-4.5,-3.25) {$B$};
			\draw [->,>=stealth] (C)--(D);
			\draw [->,>=stealth] (D)--(E);
			\draw [->,>=stealth] (E)--(F);
			\draw [->,>=stealth] (A)--(E);
			\draw [->,>=stealth] (A)--(F);
			\draw [->,>=stealth] (B)--(D);
			\draw [->,>=stealth] (B)--(F);
			\end{tikzpicture}
		}%
		\caption[.] {Causal Structure where our outer approximations for classical, quantum and box-world resources coincide.}
		\label{fig:str5}
	\end{figure}
	In~\cite{Henson2014} the same inequalities were derived for classical $A$ and $B$. It follows from Proposition~\ref{lem:quantumlemma} (see below), that these constraints also hold in the quantum case.  Thus, for all three theories we obtain the same outer approximation of the respective entropy cones\footnote{This may change if we were to take non-Shannon inequalities into account prior to Fourier-Motzkin elimination.}. Violation of any of these inequalities excludes this causal structure as a possible explanation of the observed correlations, irrespective of the nature of the pre-shared resources.
	
	In this example, these outer approximations are not tight: there are further valid entropy inequalities for the classical systems $C$, $D$, $E$ and $F$|so-called non-Shannon inequalities|that lead to tighter approximations, e.g.\ the following inequality (from~\cite{Zhang1997}):
	\begin{align}
	{I(D: E|F)} + {I(D : E|C)} + {I(F : C)} - {I(D : E)} \nonumber \\ 
	+{I(D : F|E)} + {I(E : F|D)} + {I(D :  E|F)} \geq 0.
	\end{align}

	\subsubsection{Causal structure of Figure~\ref{fig:str5adapt}(a)}\label{sec:str5adapta}
	With resources $A$ and $B$ that are allowed in box-world we obtain the
	Shannon inequalities and
	\begin{align}
	I(C:EF) &\leq H(E), \\
	I(C:DEF) &\leq H(D).
	\end{align}
	Classical and quantum resources $A$ and $B$ lead to slightly tighter inequalities, namely the Shannon inequalities and 
	\begin{align}
	I(C:EF) &\leq H(E) \\
	I(C:DEF)+I(D:F|E) &\leq H(D).
	\end{align}
	The question of whether or not there exist box-world distributions that
	violate one of these inequalities remains open\footnote{In general, the methods used here lead to outer approximations to the entropy cone of a causal structure. In some cases we can show these to be tight, see e.g.~\cite{linelike, thesis}, but not always. It could thus be the case that while there is a gap between the outer approximations of these cones, there is still no gap between the true cones.}.
	
	\subsubsection{Causal structure of Figure~\ref{fig:str5adapt}(b)}\label{sec:str5adaptb}
	With resources $A$ and $B$ that are allowed according to the theory of box-world we obtain the Shannon inequalities and 
	\begin{align}
	I(C:D) &= 0, \label{eq:18}\\
	H(F|CE) &\leq H(CF|DE),\label{eq:19}\\
	I(D:CEF) &\leq H(E). 
	\end{align}
	Classical and quantum resources lead to slightly tighter inequalities, namely the Shannon inequalities, Equation~\eqref{eq:18}, Inequality~\eqref{eq:19} and 
	\begin{align}
	I(D:CEF) &\leq H(E|C). 
	\end{align}
	Note that the classical case was treated in~\cite{Henson2014}.
	\bigskip

	In general, such comparisons are interesting for analysing the nature of causation
	in different theories.  In a sense the box-world inequalities can be
	thought of as minimal requirements for a theory with a reasonable
	notion of causation. Developing a systematic understanding of this may
	hint at ways to find a physical principle that singles out quantum
	correlations in general scenarios.
	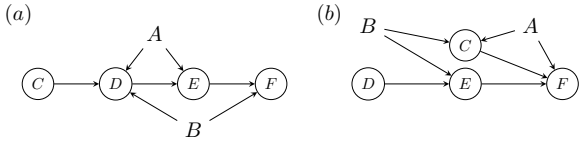
\begin{figure}
		\centering
		\resizebox{0.95 \columnwidth}{!}{%
			\begin{tikzpicture}[scale=0.7]
			\node (N) at (-8.5,-0.2) {$(a)$};
			\node[draw=black,circle,scale=0.75]  (F) at (-2,-2) {$F$};
			\node[draw=black,circle,scale=0.75]  (E) at (-4,-2) {$E$};
			\node[draw=black,circle,scale=0.75]  (D) at (-6,-2) {$D$};
			\node[draw=black,circle,scale=0.75]  (C) at (-8,-2) {$C$};
			\node (A) at (-5,-0.7) {$A$};
			\node (B) at (-4,-3.2) {$B$};
			\draw [->,>=stealth] (C)--(D);
			\draw [->,>=stealth] (D)--(E);
			\draw [->,>=stealth] (E)--(F);
			\draw [->,>=stealth] (A)--(E);
			\draw [->,>=stealth] (A)--(D);
			\draw [->,>=stealth] (B)--(D);
			\draw [->,>=stealth] (B)--(F);
			
			\node (N2) at (-0.5,-0.2) {$(b)$};
			\node[draw=black,circle,scale=0.75]  (NF) at (5.5,-2) {$F$};
			\node[draw=black,circle,scale=0.75]  (NE) at (3,-2) {$E$};
			\node[draw=black,circle,scale=0.75]  (ND) at (0.5,-2) {$D$};
			\node[draw=black,circle,scale=0.75]  (NC) at (3,-1) {$C$};
			\node (NA) at (4.675,-0.5) {$A$};
			\node (NB) at (0.5,-0.5) {$B$};
			\draw [->,>=stealth] (NC)--(NF);
			\draw [->,>=stealth] (ND)--(NE);
			\draw [->,>=stealth] (NE)--(NF);
			\draw [->,>=stealth] (NA)--(NC);
			\draw [->,>=stealth] (NA)--(NF);
			\draw [->,>=stealth] (NB)--(NC);
			\draw [->,>=stealth] (NB)--(NE);
			\end{tikzpicture}
		}%
		\caption[.] {Causal structures where there are different entropic bounds depending on the nature of the unobserved $A$ and $B$.}
		\label{fig:str5adapt}
	\end{figure}
	
	\subsection{Analysis of causal structures with post-selection}\label{sec:applications2}
	In this section we apply our method to post-selected causal structure and show how this allows us to distinguish the correlations obtained in different GPTs. We give two examples for this.\footnote{These constraints can be useful even in the classical case, where there is already the inflation technique for deriving incompatibility constraints~\cite{Wolfe2016, Navascues2017}, if the cardinality of the variables is high, for instance.}
	
	\subsubsection{Bilocality}\label{sec:biloc}
	The bilocal causal structure, first analysed in~\cite{Branciard2010, Branciard2012}, characterises the situation we encounter in scenarios where we rely on entanglement swapping~\cite{Bennett1993}, see Figure~\ref{fig:bilocality}(a). Technologically, this scenario is encountered, for instance, in quantum repeaters~\cite{Briegel1998} or event-ready detection schemes~\cite{Zukowski1993}. 
	
	The entropy vector method provides us with a convenient means to compare the observable correlations when the sources $L_1$ and $L_2$ come from different theories. Applying the entropy vector method to the bilocal causal structure with sources from a locally tomographic GPT, apart from the Shannon inequalities we find only
	\begin{equation}
	H(X_0Z_0)=H(X_0)+H(Z_0)
	\end{equation}
	up to symmetry (exchanging $X_0$ and $X_1$ as well as $Z_0$ and $Z_1$).
	
	Applying the entropy vector method to the bilocal causal structure in the quantum case we find the Shannon inequalities as well as
	\begin{align}
	H(X_0Z_0) &= H(X_0)+H(Z_0) \label{eq:biloc1}\\
	I(X_0Y_0:Z_0) &\leq H(Y_0|X_1) \label{eq:biloc2}\\
	I(X_1:Z_1|Y_0) &\leq H(Y_0|X_0)+H(Y_0|Z_0)-H(Y_0), \label{eq:biloc}
	\end{align}
	up to symmetry. [There are $4$ instances of~\eqref{eq:biloc1} (obtained by exchanging $X_0$ and $X_1$ as well as $Z_0$ and $Z_1$), $16$ of~\eqref{eq:biloc2} (obtained by exchanging $X_0$ and $X_1$, $Y_0$ and $Y_1$, or $Z_0$ and $Z_1$ and by exchanging the roles of $X$ and $Z$) and $8$ of~\eqref{eq:biloc} (obtained by exchanging $X_0$ and $X_1$, $Y_0$ and $Y_1$, or $Z_0$ and $Z_1$).]
	
	The equality constraints are found in both the quantum and GPT case, while the quantum description is tighter, and in this case the gap is provably significant. An example of a box-world distribution that violates~\eqref{eq:biloc} is obtained by taking the systems $L_1$ and $L_2$ to be PR-boxes, $A \in \left\{0,1 \right\}$ is the uniform input and $X\in \left\{0,1 \right\}$ the output on the left, and analogously for $C$ and $Z$ on the right. $B\in \left\{0,1 \right\}$ is with probability $\frac{1}{2}$ input into the first and with probability $\frac{1}{2}$ input into the second box and the respective output serves as an input for the other, where $Y$ is equal to the outputs of the two boxes. This distribution also violates the classical inequalities below and is, to the best of our knowledge, the first manifestation of such a violation with a GPT correlation that is proven to be achievable in the bilocal causal structure. [Note that when the violation of the classical inequalities of~\cite{Chaves2012} was discussed,  
	a tripartite box was considered directly (and without proof that it can be generated from a GPT in the bilocal causal structure)\footnote{Note also that we have a slight difference from~\cite{Chaves2012}: they did not consider $B$.}.]
	
	For classical sources $L_1$ and $L_2$, we obtain a convex cone constrained by Shannon inequalities and $53$ other independent classes of linear inequalities. We list one representative of each of these $53$ classes in Appendix~\ref{app:biloc}. Note that the (in)equalities obtained in the quantum case (cf.~\eqref{eq:biloc1}--\eqref{eq:biloc}) are present. A quantum distribution that leads to an entropy vector outside this cone and which is thus not achievable with classical resources in this causal structure is, for instance, the following: Suppose $L_1$ and $L_2$ share a singlet state each and at the nodes the output is made according to projective measurements $\Pi_{\theta}$ in the basis $\left\{ \cos \theta \ket{0} + \sin \theta \ket{1}, \sin \theta \ket{0} - \cos \theta \ket{1} \right\}$. The angles are chosen as follows: for $X_0$, $\theta=x$, for $X_1$, $\theta=3x$, for $Z_0$, $\theta=0$, for $Z_1$, $\theta=2x$ and for $Y_0$ we consider $\theta=0$ on the state shared by $L_1$ and $\theta=(2y_0+1) x$ on the other, for $Y_1$ we consider $\theta=2 x$ on the state shared by $L_1$ and $\theta=(2y_0+1) x$ on the other system, where the output is made up of the outputs of both measurements in both cases and where $x = 0.1$ and $y_0$ denotes the outcome of the first measurement. This distribution violates several of our inequalities, for instance one inequality from class~31 from Appendix~\ref{app:biloc}). Note also that for the entropic inequalities derived in~\cite{Chaves2012} for the bilocal causal structure with classical sources, no quantum violations are known, so, as far as we know, the characterisation reported here is the first entropic one that is provably able to resolve this gap.

	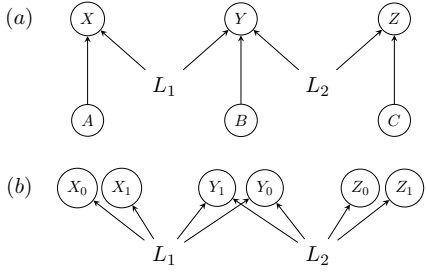
\begin{figure}
		\begin{center}
			\resizebox{0.7\columnwidth}{!}{%
				\begin{tikzpicture}[scale=0.6]%
				\node  (0) at (-17,-1) {$(a)$};
				\node (00) at (-17,-6) {$(b)$};
				\node [draw=black,circle,scale=0.75] (X1a) at (-15,-1) {$X$};
				\node [draw=black,circle,scale=0.75] (X1b) at (-15,-4) {$A$};
				\node [draw=black,circle,scale=0.75] (Y1a) at (-10.5,-1) {$Y$};
				\node [draw=black,circle,scale=0.75] (Y1b) at (-10.5,-4) {$B$};
				\node [draw=black,circle,scale=0.75] (Z1a) at (-6,-1) {$Z$};
				\node [draw=black,circle,scale=0.75] (Z1b) at (-6,-4) {$C$};
				\node (Aa) at (-12.75,-3) {$L_1$};
				\node (Bb) at (-8.25,-3) {$L_2$};
				
				\draw [->,>=stealth] (Aa)--(Y1a);
				\draw [->,>=stealth] (Aa)--(X1a);
				\draw [->,>=stealth] (Bb)--(Y1a);
				\draw [->,>=stealth] (Bb)--(Z1a);
				\draw [->,>=stealth] (Y1b)--(Y1a);
				\draw [->,>=stealth] (Z1b)--(Z1a);
				\draw [->,>=stealth] (X1b)--(X1a);
				
				\node [draw=black,circle,scale=0.75] (X1) at (-14,-6) {$X_1$};
				\node [draw=black,circle,scale=0.75] (X0) at (-15.3,-6) {$X_0$};
				\node [draw=black,circle,scale=0.75] (Y0) at (-9.8,-6) {$Y_0$};
				\node [draw=black,circle,scale=0.75] (Y1) at (-11.2,-6) {$Y_1$};
				\node [draw=black,circle,scale=0.75] (Z0) at (-7,-6) {$Z_0$};
				\node [draw=black,circle,scale=0.75] (Z1) at (-5.7,-6) {$Z_1$};
				\node (A) at (-12.75,-8) {$L_1$};
				\node (B) at (-8.25,-8) {$L_2$};
				
				\draw [->,>=stealth] (A)--(Y0);
				\draw [->,>=stealth] (A)--(X0);
				\draw [->,>=stealth] (A)--(Y1);
				\draw [->,>=stealth] (A)--(X1);
				\draw [->,>=stealth] (B)--(Y0);
				\draw [->,>=stealth] (B)--(Z0);
				\draw [->,>=stealth] (B)--(Y1);
				\draw [->,>=stealth] (B)--(Z1);
				\end{tikzpicture}
			}
		\end{center}
		\caption[.] {Bilocal causal structure. (a) The nodes labelled $A$, $B$, $C$, $X$, $Y$ and $Z$ correspond to observations, modelled as random variables. The nodes $A$ and $B$ label a resource-system which in this case we take to be quantum. (b) Post-selected version of the bilocal causal structure, where $X_0$ stands for $X_{|A=0}$ and similarly for $X_1$ as well as the corresponding $Y$ and $Z$.}
		\label{fig:bilocality}
	\end{figure}
	
	\subsubsection{Information Causality}\label{sec:IC}
	Information causality~\cite{Pawlowski2009} is a candidate principle for singling out quantum theory. Roughly speaking the principle is that that sending $n$ bit of classical information from one party to another cannot give the recipient access to more than $n$ bits of previously unknown information regardless of any pre-shared resources the parties may have. The associated causal structure is shown in Figure~\ref{fig:infcaus}, and information causality is obeyed if
	\begin{align}
	I(X_1:Y_{|R=1})+I(X_2:Y_{|R=2})\leq H(Z)\,.
	\end{align}
	This relation is known to hold in both classical and quantum theory, while it is violated in box-world.\footnote{In fact, tighter entropy inequalities have since been shown to hold in the quantum (and classical) case~\cite{Chaves2015}.}
	
	Using the technique of the present paper, we find that the following relations hold for underlying box-world systems
	\begin{align}
	I(X_1X_2:Y_{|R=1}Z) &\leq H(Z) \label{eq:IC1} \\
	I(X_1X_2:Y_{|R=2}Z) &\leq H(Z) \label{eq:IC2} \, .
	\end{align}
	These are valid in any GPT (see Proposition~\ref{prop:generalGPT}) for the entropy vectors ${\bf{H}}$ containing the entropies of all $23$ subsets of the coexisting sets $\{X_1,X_2,Y_{|R=1},Z\}$ and $\{X_1,X_2,Y_{|R=2},Z\}$. Thus, although information causality does not hold in general, some minimal notion of causation remains (beyond no-signalling).
	
	We remark that the information causality scenario in boxworld was also considered in~\cite{Chaves2016}, but in a slightly different way.  There the relations ${I(X_1:Y_{|R=1})} \leq H(Z)$ and ${I(X_2:Y_{|R=2})} \leq H(Z)$ were postulated. We are able to recover these with our approach in the following way.  If, instead of considering all joint entropies of coexisting variables, only the restricted entropy vectors with components ${\bf{H}_R}=(H(X_1)$, $H(X_2)$, $H(Y_{|R=1})$, $H(Y_{|R=2})$, $H(Z)$, $H(X_1 Y_{|R=1})$, $H(X_2Y_{|R=2}))$ are considered, the inequalities ${I(X_1:Y_{|R=1})}\leq H(Z)$ and ${I(X_2:Y_{|R=2})}\leq H(Z)$ emerge. Since all extremal vertices of the entropy cone of vectors ${\bf{H}_R}$ are achievable (as was shown in~\cite{Chaves2016}) and, according to Corollary~\ref{cor:convexity}, the (closure of the) set of achievable entropy vectors ${\bf{H}_R}$ is convex, this is the true entropy cone for this restricted marginal scenario.  However, we don't see a clear motivation for excluding the additional observed entropies.
	
	The two inequalities~\eqref{eq:IC1} and~\eqref{eq:IC2} were already derived in~\cite{Short2010a} for box-world, but here they emerge systematically from our method (and our results imply that they hold for any GPT). They are the only inequalities (other than Shannon inequalities) that follow from our method as it was introduced above. However, further inequalities hold for this scenario, for instance the non-Shannon inequality~\cite{Zhang1997}
	\begin{align*}
	I(X_2:Y_{|R=1}|Z)+I(X_2:Y_{|R=1}|X_1)+I(Z:X_1)\nonumber\\
	-I(X_2\!:\!Y_{|R=1})\!+\!I(X_2\!:\!Z|Y_{|R=1})\!+\!I(Y_{|R=1}\!:\!Z|X_2) \! \\
	+I(X_2:Y_{|R=1}|Z)\geq0\,. 
	\end{align*}
	Because a complete set of non-Shannon inequalities is not known, we do
	not have a complete characterisation of the entropy cone of vectors
	${\bf{H}}$ for this scenario (which may require infinitely many linear
	inequalities).
	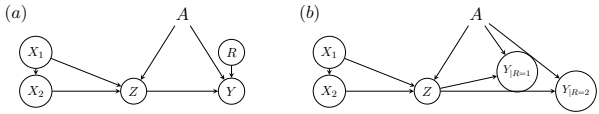
\begin{figure}
		\centering
		\resizebox{0.975 \columnwidth}{!}{%
			\begin{tikzpicture}[scale=0.85]
			\node (a)  at (-17.5,0) {$(a)$};
			\node[draw=black,circle,scale=0.75]  (Y20) at (-12,-2) {$Y$};
			\node[draw=black,circle,scale=0.75]  (R20) at (-12,-1) {$R$};
			\node[draw=black,circle,scale=0.75]  (Z10) at (-14.5,-2) {$Z$};
			\node[draw=black,circle,scale=0.75]  (X10) at (-17,-1) {$X_1$};
			\node[draw=black,circle,scale=0.75]  (X20) at (-17,-2) {$X_2$};
			\node (A10) at (-13.25,0) {$A$};
			\draw [->,>=stealth] (A10)--(Y20);
			\draw [->,>=stealth] (A10)--(Z10);
			\draw [->,>=stealth] (Z10)--(Y20);
			\draw [->,>=stealth] (X10)--(X20);
			\draw [->,>=stealth] (X10)--(Z10);
			\draw [->,>=stealth] (X20)--(Z10);
			\draw [->,>=stealth] (R20)--(Y20);

			\node (b)  at (-10,0) {$(b)$};
			\node[draw=black,circle,scale=0.65]  (Y1) at (-4.7,-1.5) {$Y_{|R=1}$};
			\node[draw=black,circle,scale=0.65]  (Y2) at (-3.2,-2) {$Y_{|R=2}$};
			\node[draw=black,circle,scale=0.75]  (Z1) at (-7,-2) {$Z$};
			\node[draw=black,circle,scale=0.75]  (X1) at (-9.5,-1) {$X_1$};
			\node[draw=black,circle,scale=0.75]  (X2) at (-9.5,-2) {$X_2$};
			\node (A1) at (-5.75,0) {$A$};
			\draw [->,>=stealth] (A1)--(Y1);
			\draw [->,>=stealth] (A1)--(Y2);
			\draw [->,>=stealth] (A1)--(Z1);
			\draw [->,>=stealth] (Z1)--(Y1);
			\draw [->,>=stealth] (Z1)--(Y2);
			\draw [->,>=stealth] (X1)--(X2);
			\draw [->,>=stealth] (X1)--(Z1);
			\draw [->,>=stealth] (X2)--(Z1);
			\end{tikzpicture}
		}%
		\caption[.] {Causal structure of the Information Causality
			scenario. (a) Alice holds two pieces of information $X_1$ and $X_2$
			and is allowed to send a message $Z$ to Bob. Bob then has to make a
			guess of either $X_1$ or $X_2$, depending on the request of a
			referee, $R=1$ or $R=2$. $A$ is a pre-shared resource the parties
			may use. (b) We divide $Y$ into two variables, $Y_{|R=1}$ and
			$Y_{|R=2}$, depending on the question $R$. While for
			classical $A$ Bob can always compute the value of both $Y_{|R=1}$
			and $Y_{|R=2}$, more generally these have to be understood as
			alternatives, of which only one is generated.}
		\label{fig:infcaus}
	\end{figure}

	\section{Limitations and directions to overcome them}
	As is the case for previous entropic methods~\cite{Chaves2012, Fritz2013, Chaves2015}, there are causal structures for which this method does not imply any entropic constraints for the observed variables (except Shannon inequalities), an example being the triangle causal structure~\cite{Steudel2015, Fritz2012, Henson2014, Chaves2015, Wolfe2016, nonshan, Gisin2017}.\footnote{However, the triangle causal structure with inputs can be treated with our method.}  Furthermore, all known strategies that certify incompatibility of entropy vectors relying on GPT resources with classical and quantum scenarios rely on post-selection (cf.\ Figure~\ref{fig:infcaus}). If post-selection is necessary for this, for some causal structures (such as the triangle causal structure) current entropic techniques cannot certify this distinction. This is not a severe limitation, since most experimentally interesting causal structures involve measurement settings, which we can post-select on.
	
	Considering entropy vectors rather than the corresponding joint probability distributions gives a computational advantage and provides constraints that are valid independently of the dimension of the involved resources. However, this advantage comes with restricted precision (see for instance~\cite{linelike}).  In particular, there are distributions between observed variables that are realisable with box-world resources but not with classical or quantum systems but which the method cannot certify as such. How to overcome this remains an open question, although the ideas in~\cite{Chaves2013} may form a useful starting point (notwithstanding the limitations of the ideas of~\cite{Chaves2013} identified in~\cite{VC2019b}).
	
	While we found that our method strictly improves on previous entropic methods~\cite{Chaves2016}, another promising research avenue is to generalise the inflation technique to GPTs. This research has been started in~\cite{Wolfe2016}, where it was pointed out that certain inflations are valid for any GPT and hence some general constraints can be derived with it. The question of how these constraints relate to the ones found with our method is left for future work. One key difference is that the method presented in~\cite{Wolfe2016} does not distinguish what GPT the latent systems are described by (e.g.\ whether they are quantum and box-world systems). Considering the latent variables explicitly allows us to make this distinction and to certify that different sets of correlations are produced within different GPTs (e.g.\ quantum theory and box-world).

	\bigskip
	
	\section*{Acknowledgements}
	We thank Peter J.~Brown for comments on an earlier version of this manuscript. This work was supported by an EPSRC First grant (grant number EP/P016588/1), the EPSRC’s Quantum Communications Hub (grant number M013472/1) and by the Austrian Science fund (FWF) stand-alone project P~30947. The majority of this work was carried out while MW was based at the University of York.%
	

\begin{thebibliography}{44}
\providecommand{\natexlab}[1]{#1}
\providecommand{\url}[1]{\texttt{#1}}
\expandafter\ifx\csname urlstyle\endcsname\relax
  \providecommand{\doi}[1]{doi: #1}\else
  \providecommand{\doi}{doi: \begingroup \urlstyle{rm}\Url}\fi

\bibitem[Bell(1964)]{Bell1964}
J.~S. Bell.
\newblock {On the Einstein Podolsky Rosen paradox}.
\newblock \emph{Physics}, 1\penalty0 (3):\penalty0 195--200, 1964.
\newblock ISSN 01923188.
\newblock \doi{10.1002/prop.19800281202}.

\bibitem[Wood and Spekkens(2015)]{Wood2012}
C.~J. Wood and R.~W. Spekkens.
\newblock The lesson of causal discovery algorithms for quantum correlations:
  causal explanations of {B}ell-inequality violations require fine-tuning.
\newblock \emph{New Journal of Physics}, 17\penalty0 (3):\penalty0 033002,
  2015.
\newblock ISSN 1367-2630.
\newblock \doi{10.1088/1367-2630/17/3/033002}.

\bibitem[Pawlowski et~al.(2009)Pawlowski, Paterek, Kaszlikowski, Scarani,
  Winter, and Zukowski]{Pawlowski2009}
M.~Pawlowski, T.~Paterek, D.~Kaszlikowski, V.~Scarani, A.~Winter, and
  M.~Zukowski.
\newblock Information causality as a physical principle.
\newblock \emph{Nature}, 461\penalty0 (7267):\penalty0 1101--1104, 2009.
\newblock ISSN 0028-0836.
\newblock \doi{10.1038/nature08400}.

\bibitem[Al-Safi and Short(2011)]{Al-Safi2011}
S.~W. Al-Safi and A.~J. Short.
\newblock Information causality from an entropic and a probabilistic
  perspective.
\newblock \emph{Physical Review A}, 84\penalty0 (4):\penalty0 042323, 2011.
\newblock ISSN 1050-2947.
\newblock \doi{10.1103/PhysRevA.84.042323}.

\bibitem[Clauser et~al.(1969)Clauser, Horne, Shimony, and Holt]{Clauser1969}
J.~F. Clauser, M.~A. Horne, A.~Shimony, and R.~A. Holt.
\newblock Proposed experiment to test local hidden-variable theories.
\newblock \emph{Physical Review Letters}, 23\penalty0 (15):\penalty0 880--884,
  1969.
\newblock ISSN 0031-9007.
\newblock \doi{10.1103/PhysRevLett.23.880}.

\bibitem[Greenberger et~al.(1989)Greenberger, Horne, and Zeilinger]{GHZ}
D.~M. Greenberger, M.~A. Horne, and A.~Zeilinger.
\newblock Going beyond {Bell's} theorem.
\newblock In M.~Kafatos, editor, \emph{{B}ell's Theorem, Quantum Mechanics and
  Conceptions of the Universe}, pages 69--72. Kluwer Academic, Dordrecht, The
  Netherlands, 1989.
\newblock \doi{10.1007/978-94-017-0849-4}.

\bibitem[Fritz(2012)]{Fritz2012}
T.~Fritz.
\newblock Beyond {B}ell's theorem: correlation scenarios.
\newblock \emph{New Journal of Physics}, 14\penalty0 (10):\penalty0 103001,
  2012.
\newblock ISSN 1367-2630.
\newblock \doi{10.1088/1367-2630/14/10/103001}.

\bibitem[Fritz and Chaves(2013)]{Fritz2013}
T.~Fritz and R.~Chaves.
\newblock Entropic inequalities and marginal problems.
\newblock \emph{IEEE Transactions on Information Theory}, 59\penalty0
  (2):\penalty0 803--817, 2013.
\newblock ISSN 0018-9448.
\newblock \doi{10.1109/TIT.2012.2222863}.

\bibitem[Chaves et~al.(2015)Chaves, Majenz, and Gross]{Chaves2015}
R.~Chaves, C.~Majenz, and D.~Gross.
\newblock Information-theoretic implications of quantum causal structures.
\newblock \emph{Nature communications}, 6:\penalty0 5766, 2015.
\newblock ISSN 2041-1723.
\newblock \doi{10.1038/ncomms6766}.

\bibitem[Weilenmann and Colbeck(2018)]{nonshan}
M.~Weilenmann and R.~Colbeck.
\newblock Non-{S}hannon inequalities in the entropy vector approach to causal
  structures.
\newblock \emph{Quantum}, 2, 2018.
\newblock \doi{10.22331/q-2018-03-14-57}.

\bibitem[Wolfe et~al.(2016)Wolfe, Spekkens, and Fritz]{Wolfe2016}
E.~Wolfe, R.~W. Spekkens, and T.~Fritz.
\newblock {The Inflation Technique for Causal Inference with Latent Variables}.
\newblock \emph{Journal of Causal Inference}, 2016.
\newblock \doi{10.1515/jci-2017-0020}.

\bibitem[Tsirelson(1980)]{Tsirelson1980}
B.~S. Tsirelson.
\newblock Quantum generalizations of {B}ell's inequality.
\newblock \emph{Letters in Mathematical Physics}, 4\penalty0 (2):\penalty0
  93--100, 1980.
\newblock ISSN 1573-0530.
\newblock \doi{10.1007/BF00417500}.

\bibitem[Van~Himbeeck et~al.(2019)Van~Himbeeck, Bohr~Brask, Pironio,
  Ramanathan, Sainz, and Wolfe]{VanHimbeeck2019}
T.~Van~Himbeeck, J.~Bohr~Brask, S.~Pironio, R.~Ramanathan, A.~B. Sainz, and
  E.~Wolfe.
\newblock Quantum violations in the {I}nstrumental scenario and their relations
  to the {B}ell scenario.
\newblock \emph{{Quantum}}, 3:\penalty0 186, 2019.
\newblock ISSN 2521-327X.
\newblock \doi{10.22331/q-2019-09-16-186}.

\bibitem[Henson et~al.(2014)Henson, Lal, and Pusey]{Henson2014}
J.~Henson, R.~Lal, and M.~F. Pusey.
\newblock Theory-independent limits on correlations from generalized {B}ayesian
  networks.
\newblock \emph{New Journal of Physics}, 16\penalty0 (11):\penalty0 113043,
  2014.
\newblock ISSN 1367-2630.
\newblock \doi{10.1088/1367-2630/16/11/113043}.

\bibitem[Chaves and Budroni(2016)]{Chaves2016}
R.~Chaves and C.~Budroni.
\newblock Entropic nonsignaling correlations.
\newblock \emph{Physical Review Letters}, 116\penalty0 (24):\penalty0 240501,
  2016.
\newblock ISSN 0031-9007.
\newblock \doi{10.1103/PhysRevLett.116.240501}.

\bibitem[Short and Barrett(2010)]{Short2010}
A.~J. Short and J.~Barrett.
\newblock Strong nonlocality: a trade-off between states and measurements.
\newblock \emph{New Journal of Physics}, 12\penalty0 (3):\penalty0 033034,
  2010.
\newblock ISSN 1367-2630.
\newblock \doi{10.1088/1367-2630/12/3/033034}.

\bibitem[Barnum et~al.(2010)Barnum, Barrett, Clark, Leifer, Spekkens, Stepanik,
  Wilce, and Wilke]{Barnum2010}
H.~Barnum, J.~Barrett, L.~O. Clark, M.~Leifer, R.~Spekkens, N.~Stepanik,
  A.~Wilce, and R.~Wilke.
\newblock Entropy and information causality in general probabilistic theories.
\newblock \emph{New Journal of Physics}, 12\penalty0 (3):\penalty0 033024,
  2010.
\newblock ISSN 1367-2630.
\newblock \doi{10.1088/1367-2630/12/3/033024}.

\bibitem[Braunstein and Caves(1988)]{Braunstein1988}
S.~L. Braunstein and C.~M. Caves.
\newblock Information-theoretic {B}ell inequalities.
\newblock \emph{Physical Review Letters}, 61\penalty0 (6):\penalty0 662--665,
  1988.
\newblock ISSN 0031-9007.
\newblock \doi{10.1103/PhysRevLett.61.662}.

\bibitem[Steudel and Ay(2015)]{Steudel2015}
B.~Steudel and N.~Ay.
\newblock Information-theoretic inference of common ancestors.
\newblock \emph{Entropy}, 17\penalty0 (4):\penalty0 2304--2327, 2015.
\newblock ISSN 1099-4300.
\newblock \doi{10.3390/e17042304}.

\bibitem[Chaves and Fritz(2012)]{Chaves2012}
R.~Chaves and T.~Fritz.
\newblock Entropic approach to local realism and noncontextuality.
\newblock \emph{Physical Review A}, 85\penalty0 (3):\penalty0 032113, 2012.
\newblock ISSN 1050-2947.
\newblock \doi{10.1103/PhysRevA.85.032113}.

\bibitem[Pienaar(2017)]{Pienaar2016}
J.~Pienaar.
\newblock Which causal structures might support a quantum–classical gap?
\newblock \emph{New Journal of Physics}, 19\penalty0 (4):\penalty0 043021,
  2017.
\newblock \doi{10.1088/1367-2630/aa673e}.

\bibitem[Weilenmann and Colbeck(2017)]{review}
M.~Weilenmann and R.~Colbeck.
\newblock Analysing causal structures with entropy.
\newblock \emph{Proceedings of the Royal Society A}, 473\penalty0 (2207), 2017.
\newblock \doi{10.1098/rspa.2017.0483}.

\bibitem[Short and Wehner(2010)]{Short2010a}
A.~J. Short and S.~Wehner.
\newblock Entropy in general physical theories.
\newblock \emph{New Journal of Physics}, 12\penalty0 (3):\penalty0 033023,
  2010.
\newblock ISSN 1367-2630.
\newblock \doi{10.1088/1367-2630/12/3/033023}.

\bibitem[Zhang and Yeung(1997)]{Zhang1997}
Z.~Zhang and R.~W. Yeung.
\newblock A non-{S}hannon-type conditional inequality of information
  quantities.
\newblock \emph{IEEE Transactions on Information Theory}, 43\penalty0
  (6):\penalty0 1982--1986, 1997.
\newblock ISSN 00189448.
\newblock \doi{10.1109/18.641561}.

\bibitem[Barrett(2007)]{Barrett2007}
J.~Barrett.
\newblock Information processing in generalized probabilistic theories.
\newblock \emph{Physical Review A}, 75\penalty0 (3):\penalty0 032304, 2007.
\newblock ISSN 1050-2947.
\newblock \doi{10.1103/PhysRevA.75.032304}.

\bibitem[Williams(1986)]{Williams1986}
H.~P. Williams.
\newblock Fourier's method of linear programming and its dual.
\newblock \emph{The American Mathematical Monthly}, 93\penalty0 (9):\penalty0
  681--695, 1986.
\newblock \doi{10.2307/2322281}.

\bibitem[Monniaux(2010)]{Monniaux2010}
D.~Monniaux.
\newblock Quantifier elimination by lazy model enumeration.
\newblock In \emph{International Conference on Computer Aided Verification},
  pages 585--599. Springer, 2010.
\newblock \doi{10.1007/978-3-642-14295-6_51}.

\bibitem[Cadney and Linden(2012)]{Cadney2012a}
J.~Cadney and N.~Linden.
\newblock Measurement entropy in generalized nonsignalling theory cannot detect
  bipartite nonlocality.
\newblock \emph{Physical Review A}, 86\penalty0 (5):\penalty0 052103, 2012.
\newblock ISSN 1050-2947.
\newblock \doi{10.1103/PhysRevA.86.052103}.

\bibitem[Weilenmann and Colbeck(2016)]{linelike}
M.~Weilenmann and R.~Colbeck.
\newblock Inability of the entropy vector method to certify nonclassicality in
  linelike causal structures.
\newblock \emph{Physical Review A}, 94:\penalty0 042112, 2016.
\newblock \doi{10.1103/PhysRevA.94.042112}.

\bibitem[Pippenger(2003)]{Pippenger}
N.~Pippenger.
\newblock The inequalities of quantum information theory.
\newblock \emph{IEEE Transactions on Information Theory}, 49\penalty0
  (4):\penalty0 773--789, 2003.
\newblock \doi{10.1109/TIT.2003.809569}.

\bibitem[Pearl(1995)]{Pearl1995}
J.~Pearl.
\newblock On the testability of causal models with latent and instrumental
  variables.
\newblock In \emph{Proceedings of the Eleventh conference on Uncertainty in
  artificial intelligence}, pages 435--443. Morgan Kaufmann Publishers Inc.,
  1995.
\newblock \doi{10.5555/2074158.2074208}.

\bibitem[Chaves et~al.(2018)Chaves, Carvacho, Agresti, Di~Giulio, Aolita,
  Giacomini, and Sciarrino]{Chaves2018}
R.~Chaves, G.~Carvacho, I.~Agresti, V.~Di~Giulio, L.~Aolita, S.~Giacomini, and
  F.~Sciarrino.
\newblock Quantum violation of an instrumental test.
\newblock \emph{Nature Physics}, 14\penalty0 (3):\penalty0 291, 2018.
\newblock \doi{10.1038/s41567-017-0008-5}.

\bibitem[Chaves et~al.(2014)Chaves, Luft, Maciel, Gross, Janzing, and
  Sch{\"o}lkopf]{Chaves2014}
R.~Chaves, L.~Luft, T.~Maciel, D.~Gross, D.~Janzing, and B.~Sch{\"o}lkopf.
\newblock Inferring latent structures via information inequalities.
\newblock In \emph{Proceedings of the 30th Conference on Uncertainty in
  Artificial Intelligence}, pages 112--121, Corvallis, Oregon, 2014. AUAI
  Press.
\newblock \doi{10.5555/3020751.3020764}.

\bibitem[Weilenmann(2017)]{thesis}
M.~Weilenmann.
\newblock \emph{Quantum causal structure and quantum thermodynamics}.
\newblock PhD thesis, University of York, 2017.
\newblock Also available as \url{arXiv:1807.06345}.

\bibitem[Navascues and Wolfe(2017)]{Navascues2017}
M.~Navascues and E.~Wolfe.
\newblock The inflation technique solves completely the classical inference
  problem.
\newblock e-print \href{https://arxiv.org/abs/1707.06476}{arXiv:1707.06476},
  2017.

\bibitem[Branciard et~al.(2010)Branciard, Gisin, and Pironio]{Branciard2010}
C.~Branciard, N.~Gisin, and S.~Pironio.
\newblock {Characterizing the nonlocal correlations created via entanglement
  swapping.}
\newblock \emph{Physical Review Letters}, 104\penalty0 (17):\penalty0 170401,
  2010.
\newblock ISSN 1079-7114.
\newblock \doi{10.1103/PhysRevLett.104.170401}.

\bibitem[Branciard et~al.(2012)Branciard, Rosset, Gisin, and
  Pironio]{Branciard2012}
C.~Branciard, D.~Rosset, N.~Gisin, and S.~Pironio.
\newblock {Bilocal versus nonbilocal correlations in entanglement-swapping
  experiments}.
\newblock \emph{Physical Review A}, 85\penalty0 (3):\penalty0 032119, 2012.
\newblock ISSN 1050-2947.
\newblock \doi{10.1103/PhysRevA.85.032119}.

\bibitem[Bennett et~al.(1993)Bennett, Brassard, Cr{\'{e}}peau, Jozsa, Peres,
  and Wootters]{Bennett1993}
C.~H. Bennett, G.~Brassard, C.~Cr{\'{e}}peau, R.~Jozsa, A.~Peres, and W.~K.
  Wootters.
\newblock {Teleporting an unknown quantum state via dual classical and
  Einstein-Podolsky-Rosen channels}.
\newblock \emph{Physical Review Letters}, 70\penalty0 (13):\penalty0
  1895--1899, 1993.
\newblock ISSN 0031-9007.
\newblock \doi{10.1103/PhysRevLett.70.1895}.

\bibitem[Briegel et~al.(1998)Briegel, D{\"{u}}r, Cirac, and
  Zoller]{Briegel1998}
H.-J. Briegel, W.~D{\"{u}}r, J.~I. Cirac, and P.~Zoller.
\newblock {Quantum Repeaters: The Role of Imperfect Local Operations in Quantum
  Communication}.
\newblock \emph{Physical Review Letters}, 81\penalty0 (26):\penalty0
  5932--5935, 1998.
\newblock ISSN 0031-9007.
\newblock \doi{10.1103/PhysRevLett.81.5932}.

\bibitem[Zukowski et~al.(1993)Zukowski, Zeilinger, Horne, and
  Ekert]{Zukowski1993}
M.~Zukowski, A.~Zeilinger, M.~A. Horne, and A.~Ekert.
\newblock {"Event-ready-detectors" Bell experiment via entanglement swapping.}
\newblock \emph{Physical Review Letters}, 71\penalty0 (26):\penalty0
  4287--4290, 1993.
\newblock ISSN 1079-7114.
\newblock \doi{10.1103/PhysRevLett.71.4287}.

\bibitem[Gisin(2017)]{Gisin2017}
N.~Gisin.
\newblock The elegant joint quantum measurement and some conjectures about
  {N}-locality in the triangle and other configurations.
\newblock e-print \href{https://arxiv.org/abs/1708.05556}{arXiv:1708.05556},
  2017.

\bibitem[Chaves(2013)]{Chaves2013}
R.~Chaves.
\newblock Entropic inequalities as a necessary and sufficient condition to
  noncontextuality and locality.
\newblock \emph{Physical Review A}, 87\penalty0 (2):\penalty0 022102, 2013.
\newblock ISSN 1050-2947.
\newblock \doi{10.1103/PhysRevA.87.022102}.

\bibitem[Vilasini and Colbeck(2019)]{VC2019b}
V.~Vilasini and R.~Colbeck.
\newblock On the sufficiency of entropic inequalities for detecting
  non-classicality in the {B}ell causal structure.
\newblock e-print \href{https://arxiv.org/abs/1912.01031}{arXiv:1912.01031},
  2019.

\bibitem[Klein(1931)]{Klein1931}
O.~Klein.
\newblock Zur quantenmechanischen {B}egr{\"u}ndung des zweiten {H}auptsatzes
  der {W}{\"a}rmelehre.
\newblock \emph{Zeitschrift f{\"u}r Physik}, 72\penalty0 (11):\penalty0
  767--775, 1931.
\newblock ISSN 0044-3328.
\newblock \doi{10.1007/BF01341997}.

\end{thebibliography}

	\onecolumngrid
	
	\appendix

	\section{Further details regarding the inequalities for the measurement entropy in GPTs} \label{app:general}
	
	\label{app:boxworld} \label{app:lemcoarse}
	
	Before getting to the additional properties, we need a few lemmas. The
	first is the concavity of $\hplo$, proven in~\cite{Short2010a,
		Barnum2010}, which follows from the concavity of the Shannon
	entropy.
	
	\begin{lem}\label{lem:concavity}
		For $s_1$, $s_2\in\sta_A$ and $0 \leq p \leq 1$,
		$\hpl{A}_{p s_1+(1-p) s_2} \geq p \hpl{A}_{s_1}+ (1-p) \hpl{A}_{s_2}$.
	\end{lem}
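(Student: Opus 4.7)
The plan is to reduce the statement to the classical concavity of Shannon entropy by exploiting the fact that every effect is a linear map on the state space. Since the definition of $\hplo$ is an infimum of Shannon entropies of outcome distributions over fine-grained measurements, proving concavity of the infimum amounts to proving the inequality pointwise (for each measurement) and then taking the infimum carefully.

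More precisely, fix any fine-grained measurement $M=\{e^x\}\in\me_A^*$. By linearity of each effect, $e^x(p s_1+(1-p)s_2)=p\,e^x(s_1)+(1-p)\,e^x(s_2)$, so the outcome distribution on the mixed state $p s_1+(1-p)s_2$ is the convex combination of the outcome distributions on $s_1$ and $s_2$. The classical Shannon entropy $H$ is concave in its argument, hence
\begin{align*}
-\sum_x e^x(p s_1+(1-p)s_2)\log_2 e^x(p s_1+(1-p)s_2)
&\geq p\,H(\{e^x(s_1)\}_x)+(1-p)\,H(\{e^x(s_2)\}_x)\\
&\geq p\,\hpl{A}_{s_1}+(1-p)\,\hpl{A}_{s_2},
\end{align*}
where the second step uses that $\hpl{A}_{s_i}$ is by definition a lower bound on $H(\{e^x(s_i)\}_x)$ for every fine-grained $M$.

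The next step is to take the infimum over $M\in\me_A^*$ on the left-hand side; since the right-hand side is independent of $M$, the inequality is preserved and yields $\hpl{A}_{p s_1+(1-p)s_2}\geq p\,\hpl{A}_{s_1}+(1-p)\,\hpl{A}_{s_2}$. There is no genuine obstacle here: the only subtlety is verifying that the minimising measurement for the mixture need not be the same as for either $s_1$ or $s_2$, but the infimum-on-the-left/fixed-bound-on-the-right structure handles this automatically. If desired, one can note that the argument does not even require $\me_A^*$ to be closed, so attainment of the infima is not an issue.
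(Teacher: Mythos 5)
Your proof is correct and is essentially the argument the paper relies on: the paper states Lemma~\ref{lem:concavity} without proof, citing~\cite{Short2010a,Barnum2010} and noting that it ``follows from the concavity of the Shannon entropy,'' which is precisely the pointwise-then-infimum reduction you carry out. The key steps --- linearity of effects turning the outcome distribution of the mixture into a convex combination, concavity of $H$, and the infimum-on-the-left/fixed-bound-on-the-right structure --- are all present and correctly handled.
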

	
	The next lemma says that the infimum in the definition of conditional
	measurement entropy can be restricted to fine-grained measurements.
	
	\begin{lem} \label{lem:coarse}
		Let $s_{AB} \in \sta_{AB}$, then $\hpl{A|B}=\inf_{\{f^y\} \in \me^*_B}\sum_y f^y(s_B)\hpl{A_{|y}}$.
	\end{lem}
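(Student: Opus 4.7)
The plan is to prove the two inequalities $\hpl{A|B} \leq \inf_{\{f^y\}\in\me_B^*}\sum_y f^y(s_B)\hpl{A_{|y}}$ and $\hpl{A|B} \geq \inf_{\{f^y\}\in\me_B^*}\sum_y f^y(s_B)\hpl{A_{|y}}$ separately. The first is trivial: since $\me_B^*\subseteq\me_B$, the infimum over the smaller set is at least that over the larger set, matching the definition of $\hpl{A|B}$.

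For the nontrivial direction, I would argue that any (finite-outcome) measurement $M=\{f^y\}_{y\in\esp_M}\in\me_B$ admits a fine-grained refinement $N=\{g^z\}_{z\in\esp_N}\in\me_B^*$ with refinement map $F:\esp_N\to\esp_M$, so that $f^y=\sum_{z:F(z)=y} g^z$. Existence of such a refinement is the only delicate point and is handled under the paper's standing assumption that fine-grained measurements exist in the GPT (together with the closure of $\me_B$ under coarse-graining and relabelling and a Zorn-type argument on chains of refinements, with compactness of the state space guaranteeing termination in the relevant sense).

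Given such a refinement, the key observation is that the reduced post-measurement state conditioned on outcome $y$ of $M$ can be written as a convex combination of the reduced post-measurement states conditioned on outcomes $z$ of $N$ with $F(z)=y$: using~\eqref{eq:condstate},
\begin{equation*}
s_{A|y}=\frac{(I_A\ot f^y)(s_{AB})}{f^y(s_B)}=\sum_{z:F(z)=y}\frac{g^z(s_B)}{f^y(s_B)}\,s_{A|z},
\end{equation*}
with nonnegative weights summing to one. Applying concavity of $\hplo$ (Lemma~\ref{lem:concavity}) yields $\hpl{A_{|y}}\geq\sum_{z:F(z)=y}\frac{g^z(s_B)}{f^y(s_B)}\hpl{A_{|z}}$, and multiplying by $f^y(s_B)$ and summing over $y$ gives
\begin{equation*}
\sum_y f^y(s_B)\hpl{A_{|y}}\geq\sum_z g^z(s_B)\hpl{A_{|z}} \geq \inf_{\{g^z\}\in\me_B^*}\sum_z g^z(s_B)\hpl{A_{|z}}.
\end{equation*}
Taking the infimum of the left-hand side over $\{f^y\}\in\me_B$ gives the reverse inequality, completing the proof.

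The main obstacle is ensuring a fine-grained refinement exists for every measurement rather than just knowing (per the paper's standing hypothesis) that some fine-grained measurement exists. If termination of iterated refinement is not available directly, the same conclusion can be reached by a sequence of successive refinements and a limiting argument, using continuity of the objective in the weights and the lower semicontinuity of $\hplo$, so that the infimum over $\me_B^*$ is still bounded above by the value on $M$.
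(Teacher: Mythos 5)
Your proof is correct and is essentially the paper's argument: the trivial direction follows from $\me_B^*\subseteq\me_B$, and the other direction rests on the observation that, by concavity of $\hplo$ (Lemma~\ref{lem:concavity}), coarse-graining a measurement on $B$ cannot decrease $\sum_y f^y(s_B)\hpl{A_{|y}}$ --- the paper carries this out by iterated pairwise merges of effects, while you do it in one step via the refinement map, which is the same computation. Your explicit attention to whether every measurement actually admits a fine-grained refinement is a point the paper glosses over (its standing assumption only guarantees that some fine-grained measurement exists), but this does not change the substance of the argument.
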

	\begin{proof}
		Let $\{f^y\}_{y=1}^n \in \me_B$ and consider the coarse-graining
		that combines the last two effects in $\{f^y\}_{y=1}^n$ to give a
		new measurement $\{g^y\}_{y=1}^{n-1}$ where $g^y=f^y$ for
		$y=1,\ldots,n-2$ and $g^{n-1}=f=f^{n-1}+f^n$.  We have
		\begin{align*}
		\sum_{y=1}^{n-1}&g^y(s_B) \hpl{A_{|y}}=\sum_{y=1}^{n-2}f^y(s_B)\hpl{A_{|y}}+(f^{n-1}(s_B)+f^n(s_B))\hpl{\alpha s_{A|n-1}+(1-\alpha)s_{A|n}}\,,
		\end{align*}
		where $\alpha=f^{n-1}(s_B)/(f^{n-1}(s_B)+f^n(s_B))$.  Using
		concavity, we have
		\begin{align*}
		\sum_{y=1}^{n-1}g^y(s_B) \hpl{A}_{s_{A|y}}&\geq\sum_{y=1}^{n-2}f^y(s_B)\hpl{A}_{s_{A|y}}+f^{n-1}(s_B)\hpl{s_{A|n-1}}+f^n(s_B)\hpl{s_{A|n}}\\&=\sum_{y=1}^nf^y(s_B)\hpl{A}_{s_{A|y}},
		\end{align*}
		i.e., for this coarse-graining the measurement on Bob cannot
		decrease the expected measurement entropy on Alice conditioned on
		the result.  Since all coarse-grainings can be formed by a sequence
		of such combinations, it follows that the infimum on Bob's
		measurements can be restricted to fine-grained measurements.
	\end{proof}
	
	It is also worth noting the following.
	
	\begin{lem}\label{lem:inf_FG_triv}
		Let $\{f^y\}\in\me_B$ be a trivial refinement of $\{e^x\}\in\me_B$.
		Then
		$\sum_yf^y(s_B) \hpl{A_{|y}}=\sum_xe^x(s_B)\hpl{A_{|x}}$.
	\end{lem}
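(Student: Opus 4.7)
The plan is to unpack the definition of trivial refinement and show that the post-measurement states on $A$ coincide whenever two outcomes $y, y'$ of $\{f^y\}$ map to the same outcome of $\{e^x\}$ under the refinement map. Concretely, since $\{f^y\}$ is a trivial refinement of $\{e^x\}$, there is a map $F$ from the outcome set of $\{f^y\}$ to the outcome set of $\{e^x\}$ with $\sum_{y:F(y)=x} f^y = e^x$ and, by triviality, constants $c(y) \in \mathbb{R}_{>0}$ with $f^y = c(y)\, e^{F(y)}$.

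First I would substitute $f^y = c(y)\, e^{F(y)}$ into the formula for the post-measurement state from~\eqref{eq:condstate}:
\begin{equation*}
s_{A|y} = \frac{(I_A \otimes f^y)(s_{AB})}{f^y(s_B)} = \frac{c(y)\,(I_A \otimes e^{F(y)})(s_{AB})}{c(y)\, e^{F(y)}(s_B)} = s_{A|F(y)}\,,
\end{equation*}
so the factor $c(y)$ cancels and the post-measurement state only depends on $F(y)$. Consequently $\hpl{A_{|y}} = \hpl{A_{|F(y)}}$.

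Next I would regroup the sum over $y$ according to the fibres of $F$ and use the refinement identity applied to $s_B$:
\begin{equation*}
\sum_y f^y(s_B)\, \hpl{A_{|y}} = \sum_x \sum_{y:F(y)=x} f^y(s_B)\, \hpl{A_{|x}} = \sum_x e^x(s_B)\, \hpl{A_{|x}}\,,
\end{equation*}
which is the claimed equality. There is no real obstacle here beyond careful bookkeeping: the only content is that the normalisation $c(y)$ drops out of the definition of the conditional state, so trivial refinement contributes no new information about $A$ and simply redistributes the probability weight $e^x(s_B)$ across the preimage $F^{-1}(x)$.
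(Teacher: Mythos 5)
Your proof is correct and rests on the same key observation as the paper's: the proportionality constant $c(y)$ cancels in the conditional state, so $s_{A|y}$ depends only on $F(y)$ and hence $\hpl{A_{|y}}=\hpl{A_{|F(y)}}$, after which the weights regroup via $\sum_{y:F(y)=x}f^y=e^x$. The paper phrases this as an induction over binary splittings of a single effect while you treat a general trivial refinement in one direct regrouping, but the mathematical content is identical.
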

	\begin{proof}
		Consider the case in which one of the effects in $\{e^x\}_{x=1}^n$
		is split into two to form $\{f^y\}_{y=1}^{n+1}$ with $f^y=e^y$ for
		$y=1,\ldots,n-1$ and $f^n+f^{n+1}=e^n$.  We have
		$\hpl{A_{|n+1}}=\hpl{A_{|n}}$ and hence in this case the claim
		follows from $f^{n+1}+f^n=e^n$.  Since any trivial refinement can be
		formed by combining such splittings, the result generalizes to all
		trivial refinements.
	\end{proof}

	The following lemma is in essence a restatement of part of the proof
	of Lemma~2 from~\cite{Cadney2012a}.
	
	\begin{lem}\label{lem:classfirst}
		Let $A$ be classical and $B$ be a system from an arbitrary GPT.  For
		any measurement $M\in\me_{AB}$ there exists an $n$-outcome
		measurement $N_A\in\me_A$ and measurements $N^x_B\in\me_B$ for
		$x=1,2,\ldots,n$ such that $M$ is equivalent to performing $N_A$,
		then performing $N^x_B$ (where $x$ is the result of $N_A$).
	\end{lem}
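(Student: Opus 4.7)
The plan is to exploit the simplex structure of the classical state space $\sta_A$. Write its pure states as $\{s_A^x\}_{x=1}^n$ and let $N_A=\{u^x\}_{x=1}^n$ be the standard classical measurement, so that $u^x(s_A^{x'})=\delta_{xx'}$ and $\sum_x u^x = u_A$. The key structural fact I need is that any joint state $s_{AB}\in\sta_{AB}$ decomposes as $s_{AB}=\sum_x p(x)\,s_A^x\ot s_{B|x}$ for some probabilities $p(x)$ and states $s_{B|x}\in\sta_B$. This follows because, by the preliminaries, composing a classical system with anything preserves separability, so $s_{AB}$ is a convex combination of product states; expanding each classical factor in the pure-state basis $\{s_A^x\}$ and collecting terms gives the desired form.

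Given any $M=\{g^k\}\in\me_{AB}$, I would define, for each $x$ and $k$, an effect $h_x^k\in\eff_B$ by $h_x^k(s_B)=g^k(s_A^x\ot s_B)$. Linearity is inherited from $g^k$, and $\sum_k h_x^k = u_B$ follows from $\sum_k g^k = u_{AB}$ applied to $s_A^x\ot s_B$ together with $u_{AB}(s_A^x\ot s_B)=1$. Hence $N_B^x=\{h_x^k\}_k\in\me_B$ for each $x$.

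Next I would verify that performing $N_A$ on $A$ followed by $N_B^x$ on $B$ (using the outcome $x$) reproduces $M$. By the post-measurement state formula~\eqref{eq:condstate} combined with the decomposition of $s_{AB}$ above, observing outcome $x$ in $N_A$ occurs with probability $p(x)$ and leaves $B$ in state $s_{B|x}$. The subsequent measurement $N_B^x$ then returns $k$ with probability $h_x^k(s_{B|x})=g^k(s_A^x\ot s_{B|x})$, so the joint probability of $(x,k)$ equals $p(x)\,g^k(s_A^x\ot s_{B|x})$. Summing over $x$ and using linearity of $g^k$ yields $g^k(s_{AB})$, so the sequential procedure coarse-grained by $(x,k)\mapsto k$ reproduces $M$, which is an allowed move since $\me_{AB}$ is closed under coarse-graining.

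The only step needing real care is obtaining the decomposition $s_{AB}=\sum_x p(x)\,s_A^x\ot s_{B|x}$: a generic separable decomposition need not use the pure classical states on $A$, so I would start from $s_{AB}=\sum_i \lambda_i\,t_A^i\ot t_B^i$, expand each $t_A^i=\sum_x q(x|i)\,s_A^x$, and then regroup as $\sum_x p(x)\,s_A^x\ot s_{B|x}$ with $p(x)=\sum_i\lambda_i q(x|i)$ and $s_{B|x}=\tfrac{1}{p(x)}\sum_i\lambda_i q(x|i)\,t_B^i$ whenever $p(x)>0$ (outcomes with $p(x)=0$ can be assigned an arbitrary $s_{B|x}\in\sta_B$). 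Once this decomposition is in hand, the rest of the argument is just linearity and bookkeeping.
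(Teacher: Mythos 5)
Your proof is correct and follows essentially the same route as the paper: both define the conditional measurements via $h_x^k(s_B)=g^k(s_A^x\ot s_B)$, check normalisation from $\sum_k g^k=u_{AB}$ applied to product states, and verify that the sequential procedure reproduces $M$ by linearity. The only real difference is in the final bookkeeping: you first establish the decomposition $s_{AB}=\sum_x p(x)\,s_A^x\ot s_{B|x}$ by regrouping a separable decomposition (invoking the preliminaries' statement that composition with a classical system preserves separability), whereas the paper obtains the same identity by inserting $\sum_x s_A^xf^x=I_A$, the identity transformation on the classical simplex, into $e^r(s_{AB})$ --- two equivalent ways of exploiting the simplicial structure of $\sta_A$.
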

	
	\begin{proof}
		Let $M=\{e^r\}$ and let $\{f^x\}$ be a standard classical measurement
		on $A$.  We can define a new set of effects on $B$ that act as
		$e^r_x:s_B\mapsto e^r(s_A^x\ot s_B)$, where $\{s_A^x\}$ are a set of
		states on $A$ for which $f^x(s_A^y)=\delta_{x,y}$.  For each $x$,
		the set $\{e^r_x\}_r$ form a measurement on $B$:
		$$\sum_re^r_x(s_B)=\sum_re^r(s_A^x\ot s_B)=1\quad\text{for all }s_B\in\sta_B\,.$$
		
		If we take $N_A=\{f^x\}$ and $N_B^x=\{e_x^r\}_r$, then this is
		equivalent to measuring $M$:
		\begin{align*}
		p(r)&=\sum_xp(x)p(r|x)=\sum_xp(x)e_x^r(s_{B|x})=\sum_xf^x(s_A)e^r(s_A^x\ot
		s_{B|x})=\sum_xe^r(s_A^x\ot
		(f^x\ot I_B)(s_{AB}))\\
		&=\sum_xe^r((s_A^xf^x\ot I_B)(s_{AB}))=e^r(s_{AB})\,,
		\end{align*}
		where we have used that $\sum_x s_A^xf^x$ is the identity
		transformation on the classical system $A$.
	\end{proof}

	We will in particular rely on the following corollary of this lemma.
	
	\begin{cor}\label{cor:classfirst}
		Let $A$ be classical and $B$ be a system from an arbitrary GPT. For
		any fine-grained measurement $M\in\me^*_{AB}$ there exists an
		$n$-outcome fine-grained measurement $N_A\in\me^*_A$, and
		fine-grained measurements $N^x_B\in\me^*_B$ for $x=1,2,\ldots,n$
		such that $M$ is equivalent to performing $N_A$, then performing
		$N^x_B$ (where $x$ is the result of $N_A$).
	\end{cor}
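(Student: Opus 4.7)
\textbf{Plan for proof of Corollary~\ref{cor:classfirst}.} The approach is to apply Lemma~\ref{lem:classfirst} directly to obtain a decomposition $M \leftrightarrow (N_A, \{N_B^x\}_x)$, and then argue that the choices made in the proof of that lemma are already fine-grained when $M$ is. Concretely, Lemma~\ref{lem:classfirst} produces $N_A = \{f^x\}$ as a standard classical measurement on $A$. Since the standard classical measurement is, up to relabelling, the unique fine-grained measurement on a classical system that is not a trivial refinement of another, $N_A \in \me_A^*$ for free. All remaining work concerns showing that each $N_B^x = \{e_x^r\}_r$, defined by $e_x^r : s_B \mapsto e^r(s_A^x \otimes s_B)$, inherits fine-grainedness from $M$.

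For this I would argue by contradiction. Suppose $N_B^{x_0}$ is not fine-grained for some $x_0$, so it admits a non-trivial refinement $\{\tilde e_{x_0}^{r,s}\}_{r,s}$ with $\sum_s \tilde e_{x_0}^{r,s} = e_{x_0}^r$, where not every $\tilde e_{x_0}^{r,s}$ is a scalar multiple of $e_{x_0}^r$. Consider the composite protocol ``perform $N_A$; if the outcome is $x = x_0$ then perform $\tilde N_B^{x_0}$, otherwise perform $N_B^x$.'' Following the computation in the proof of Lemma~\ref{lem:classfirst}, this has effects $h^{(x,r)}$ for $x \neq x_0$ and $\tilde h^{(x_0, r, s)} : s_{AB} \mapsto \tilde e_{x_0}^{r, s}\bigl((f^{x_0} \otimes I_B)(s_{AB})\bigr)$, and grouping outcomes by $r$ recovers the effects $e^r$ of $M$. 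Thus this composite is a refinement of $M$, which must be trivial since $M \in \me_{AB}^*$.

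Triviality forces $\tilde h^{(x_0, r, s)} = c(r, s)\, e^r$ for non-negative scalars $c(r,s)$, and similarly $h^{(x_0, r)} = \alpha(r)\, e^r$ with $\alpha(r) = \sum_s c(r,s)$. Evaluating both on an arbitrary product state $s_A \otimes s_B$ gives $f^{x_0}(s_A)\,\tilde e_{x_0}^{r,s}(s_B) = c(r,s)\, e^r(s_A \otimes s_B)$ and $f^{x_0}(s_A)\, e_{x_0}^r(s_B) = \alpha(r)\, e^r(s_A \otimes s_B)$. Choosing any $s_A$ with $f^{x_0}(s_A) > 0$ (which exists because $f^{x_0}$ is a nontrivial classical effect) and any $r$ with $\alpha(r) > 0$, dividing one identity by the other yields $\tilde e_{x_0}^{r, s} = (c(r,s)/\alpha(r))\, e_{x_0}^r$ as effects on $B$. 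This says the refinement $\tilde N_B^{x_0}$ is trivial after all, contradicting the assumption. Hence every $N_B^x$ is fine-grained.

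The main obstacle is this last step: transferring the triviality of the refinement from effects on $AB$ down to effects on $B$. The key enabling facts are that $N_A$ is a standard classical measurement (so $f^{x_0}$ is nonzero on some state, making the product-state evaluation nondegenerate) and that the composite effects factor as $f^{x_0}(s_A)\,(\cdot)(s_B)$ on product states. Care is needed about outcomes $r$ with $\alpha(r) = 0$, but these correspond to $e_{x_0}^r$ contributing no mass at $x_0$, so any split of them is automatically trivial in the relevant sense and does not affect the contradiction.
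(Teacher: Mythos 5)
Your proof is correct and follows essentially the same route as the paper's: apply Lemma~\ref{lem:classfirst}, observe that the standard classical measurement $N_A$ is automatically fine-grained, and derive a contradiction by lifting a non-trivial refinement of some $N_B^{x_0}$ to a refinement of $M$. The only substantive difference is that you explicitly justify, via evaluation on product states, why triviality of the lifted refinement of $M$ would force triviality of the refinement of $N_B^{x_0}$ --- a step the paper's proof asserts without spelling out.
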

	\begin{proof}
		We have already shown that $N_A$ can be taken to be fine-grained.
		Suppose $N_B^1$ is not fine grained, and consider $\hat{N}_B^1$ in
		which $e_1^t$ is split into other effects $e_1^{t_1}$ and
		$e_1^{t_2}$ satisfying $e_1^{t_1}+e_1^{t_2}=e_1^t$ in a non-trivial
		way, i.e., with $e_1^{t_1}$ and $e_1^{t_2}$ not proportional to one
		another or to any other effect $e_1^j$ with $j\neq t$.  The
		measurement that involves measuring $N_A$ and then $\hat{N}_B^1$ if
		$x=1$ and otherwise $N_B^x$ depending on the result is a non-trivial
		fine-graining of $M$.  Thus, if $M$ is fine-grained, so are $N^x_B$
		for all $x$.
	\end{proof}
	
	\begin{proof}[Proof of Property~\ref{it:b1}]
		For any measurements $\{e^x\}\in\me^*_A$ and $\{f^y\}\in\me^*_B$ and
		any state $s_{AB}\in\sta_{AB}$ we have
		\begin{align*}
		-\sum_xe^x(s_A)\log(e^x(s_A))-\sum_yf^y(s_B)\log(f^y(s_B))&\geq-\sum_{xy}(e^x\ot
		f^y)(s_{AB})\log((e^x\ot f^y)(s_{AB}))\\
		&\geq\hpl{AB}\,,
		\end{align*}
		where the first inequality uses subadditivity of the Shannon entropy
		and the last line follows because $\{e^x\ot f^y\}_{x,y}$ is a
		measurement on $AB$ and is hence at least as large as the infimum over
		joint measurements in $\hpl{AB}$.  Then the fact that $\hpl{A}$ and
		$\hpl{B}$ are infima over fine-grained measurements means that
		$\{e^x\}$ and $\{f^y\}$ can be chosen such that
		$-\sum_xe^x(s_A)\log(e^x(s_A))-\sum_yf^y(s_B)\log(f^y(s_B))$ is
		arbitrarily close to $\hpl{A}+\hpl{B}$ implies result.
	\end{proof}
	
	\begin{proof}[Proof of Property~\ref{it:b2}]
		Using Corollary~\ref{cor:classfirst}, the measurement on $BC$ in
		$\hpl{A|BC}$ can be decomposed into a standard measurement on $B$,
		yielding $y$, followed by a fine grained measurement on $C$
		depending on the value of $y$ obtained, i.e.,
		$$\hpl{A|BC}=\sum_ye^y(s_B)\inf_{\{h^z_y\}_z\in\me_C^*}\sum_zh^z_y(s_{C|y})H(A_{|yz})\,.$$
		For some fixed set $\{h^z_y\}_z\in\me_C^*$, the right hand side
		(without the $\inf$) takes the form
		\begin{align*}
		\sum_{yz}p(y,z)H(A_{|yz})&=H(A|YZ)=H(AY|Z)-H(Y|Z)\geq
		H(AB|Z)-H(Y)\\&=H(AB|Z)-\hpl{B}\geq\hpl{AB|C}-\hpl{B}\,,
		\end{align*}
		where we have used that there is
		a standard measurement achieving the infimum for classical systems in
		$\hpl{B}$ and the subadditivity of the Shannon entropy.  The
		result follows because we can choose $\{h^z_y\}_z\in\me_C^*$ such that
		the left hand side is arbitrarily close to $\hpl{A|BC}$.
	\end{proof}
	
	\begin{proof}[Proof of Property~\ref{it:b4}]
		We start from the definition of $\hpl{AB|C}$, and use
		Corollary~\ref{cor:classfirst} to give
		\begin{align*}
		\hpl{AB|C}\!&=\! \! \! \inf_{\{f^z\}\in\me_C,\{e^x_z\}_x\in\me_{AB}^*} \! \! \! \! \! -\sum_zf^z(s_C)\sum_xe_z^x(s_{AB|z})\log(e_z^x(s_{AB|z}))\\
		&= \! \! \! \inf_{\{f^z\}\in\me_C,\{g^x_{yz}\}_x\in\me_A^*} \! \! \! \! \! -\sum_zf^z(s_C)\sum_{xy}h^y(s_{B|z})g_{yz}^x(s_{A|yz})\log(h^y(s_{B|z})g_{yz}^x(s_{A|yz}))\\
		&= \! \! \! \inf_{\{f^z\}\in\me_C,\{g^x_{yz}\}_x\in\me_A^*} \! \! \! \! \! - \! \sum_zf^z(s_C)  \! \! \left[\!\sum_y \! h^y(s_{B|z})\log(h^y(s_{B|z}))  \!+ \! \!\sum_{xy} \! h^y(s_{B|z})g_{yz}^x(s_{A|yz})\log(g_{yz}^x(s_{A|yz}))\!\right]\\
		&\geq\inf_{\{f^z\}\in\me_C} -\sum_zf^z(s_C)\sum_yh^y(s_{B|z})\log(h^y(s_{B|z}))\quad+\\&\qquad\qquad\qquad\qquad\qquad \inf_{\{f^z\}\in\me_C,\{g^x_{yz}\}_x\in\me_A^*}-\sum_zf^z(s_C)\sum_{xy}h^y(s_{B|z})g_{yz}^x(s_{A|yz})\log(g_{yz}^x(s_{A|yz}))\\
		&\geq\hpl{B|C}+\hpl{A|BC}\,.
		\end{align*}
		In the last inequality we use that a measurement on $C$ followed by a
		fine-grained measurement on $A$ is a joint measurement on $s_{AC}$, so
		the infimum over all joint measurements cannot be larger than this term.
		
		If $C$ is classical then we can drop $\inf_{\{f^z\}\in\me_C}$ and
		take $C$ to always be measured with a standard classical
		measurement. This gives equality in both inequalities in the above
		proof.
	\end{proof}

	\section{Proof of Theorem~\ref{thm:convexity} and Corollary~\ref{cor:convexity}}\label{app:generalGPT}
	\label{app:convexity}
	The proof relies on the following Lemmas.
	
	\begin{lem} \label{lem:1}
		For any causal structure $C^\rG$ where all observed variables share
		one observed ancestor the topological closure of the set of achievable
		entropy vectors of the observed variables is convex.
	\end{lem}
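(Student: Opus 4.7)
The plan is a standard scale-mix-rescale argument, exploiting the common observed ancestor $W$ as a source of shared classical randomness. As a preliminary, I would show that the closure $\bar A$ of the achievable set is closed under positive real scaling: running $n$ independent parallel copies of a strategy, well-defined in any GPT via the tensor-product state $s_{A_1}\ot\cdots\ot s_{A_n}\in\sta_{A_1\cdots A_n}$, achieves $n\mathbf{h}$ by Shannon additivity over independent observed variables, giving positive integer scalings. For reciprocal scalings, I would use a dilution trick: mix the given state with a deterministic reference state with weights $1/m$ and $(m-1)/m$, controlled by an auxiliary classical flag carried by $W$. A direct Shannon calculation shows the diluted state has entropy vector $\mathbf{h}/m + O((\log m)/m)\,\mathbf{1}$; combined with $k$-copy scaling, this realises any positive rational $k/m$ multiple of $\mathbf{h}$ up to an error that vanishes as $m\to\infty$, and hence any positive real multiple lies in $\bar A$.

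Given $\mathbf{h}_1,\mathbf{h}_2\in A$ realised by strategies $\sigma_1,\sigma_2$ and a mixing parameter $p\in[0,1]$, I would fix a large integer $n$ and consider the scaled strategies $\sigma_i^{(n)}$ achieving $n\mathbf{h}_i$. Augment the distribution of $W$ with a binary flag $F$ with $\Pr(F=1)=p$; conditional on $F=i$, downstream nodes are prepared according to $\sigma_i^{(n)}$. Since $W$ is an ancestor of every observed variable, this conditioning is consistent with the DAG, and since the observed variables are classical, the Shannon chain rule gives, for any observed subsets $S,T$,
\begin{align*}
H(X_S|X_T) &= H(X_S|X_T F) + I(X_S:F|X_T)\\
 &= n\bigl[p\,h_1(S|T)+(1-p)\,h_2(S|T)\bigr]+\Delta_{S,T},
\end{align*}
with $0\le\Delta_{S,T}\le H(F)\le\log 2$ independent of $n$. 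Rescaling by $1/n$ via the cone property places a point of $\bar A$ within $O((\log n)/n)$ of $p\mathbf{h}_1+(1-p)\mathbf{h}_2$; letting $n\to\infty$ yields the convex combination in $\bar A$.

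The main obstacle I expect is rigorously justifying the auxiliary constructions within an arbitrary GPT. For the mixing, downstream state preparations and measurements must legitimately depend on the classical value of the flag $F$; this is straightforward when the relevant parents are classical, but at nodes with multiple GPT-valued parents one must invoke Corollary~\ref{cor:classfirst} to decompose joint classical-GPT measurements into a classical-then-conditional-GPT form. For the dilution step one must exhibit a valid deterministic reference state of the entire causal structure within the chosen GPT, which is routine but should be spelled out.
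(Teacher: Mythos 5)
Your proposal is correct and follows essentially the same route as the paper: a classical flag broadcast from the common observed ancestor selects between $n$-fold i.i.d.\ amplified versions of the two strategies, and the $O(1)$ flag-entropy contribution is washed out in the limit. The paper merely packages the two steps into one construction (flag probabilities $\tfrac{k-1}{k},\tfrac{p}{k},\tfrac{1-p}{k}$ with $k$ copies in the non-trivial branches, so the convex combination appears exactly up to the vanishing flag entropy), whereas you separate out the rescaling/dilution step; the difference is cosmetic.
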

	
	\begin{proof}
		We first prove convexity, and then show that the set form a cone.
		
		Let $C^\rG$ have $n$ observed variables and $m$ unobserved ones. Let
		$\bf{H_1}$ and $\bf{H_2}$ be two achievable entropy vectors for the
		$n$ observed variables in $C^\rG$. In the following, we show that
		for any $0 \leq p \leq 1$, there is a sequence of entropy vectors
		$\bf{H'_k}$ within $C^\rG$, such that
		$\lim_{k \rightarrow
			\infty}{\bf{H'_k}}=p{\bf{H_1}}+(1-p){\bf{H_2}}$.
		
		For $i=1,2$, suppose that ${\bf H_i}$ is generated by using states
		$\{Y_j^i\}_{j=1}^m$ for the $m$ unobserved nodes and that the observed
		random variables are $\{X_j^i\}_{j=1}^n$.  The strategy for achieving
		the convex combination is as follows.  The common observed variable is
		taken to be $X_1'(k)=(A,Z)$ where
		$$(A,Z)=\begin{cases}(0,0)&\text{ with probability }\frac{k-1}{k}\\(1,(X_1^1)^k)&\text{ with probability }\frac{p}{k}\\(2,(X_1^2)^k)&\text{ with probability }\frac{1-p}{k}\end{cases}$$
		and where $X^k$ denotes $k$ i.i.d.\ copies of a random variable $X$.
		Each of the unobserved nodes is prepared in state
		$$Y'_j=(Y_j^1)^{\ot k}\ot(Y_j^2)^{\ot k}\,.$$
		
		Each of the other observed nodes then behaves as follows.  The
		children of $X_1$ have access to $A$.  If $A=0$ they output $X_j'(k)=(0,0)$.
		If $A=1$ they perform the operation that would have led to ${\bf H_1}$
		$k$ times independently, acting on the first $k$ subsystems of any GPT
		resources they have access to.  They then output $X_j'(k)=(1,(X_j^1)^k)$.  If
		$A=2$, the procedure is the same except that the operation that would
		have led to ${\bf H_2}$ is repeated $k$ times by acting on the second
		$k$ subsystems of any GPT resources and the output is
		$X_j'(k)=(2,(X_j^2)^k)$.  Note that the first part of the argument is equal to
		$A$, so, in this way, the value of $A$ is transferred to all
		descendants.  An analogous strategy is then used for subsequent
		generations.
		
		For any subset $S$ of the observed random variables
		$\{X_j'(k)\}_{j=1}^n$ we  have
		$$H_k'(S)=H_k'(A)+pH^1(S)+(1-p)H^2(S)\,,$$
		where $H_k'$ refers to the entropy in the new strategy and $H^1$ and
		$H^2$ refer to the entropies in the original strategies (i.e.,
		according to ${\bf H_1}$ or ${\bf H_2}$).  Noting that
		$$H_k'(A)=-\frac{k-1}{k}\log\frac{k-1}{k}-\frac{p}{k}\log\frac{p}{k}-\frac{1-p}{k}\log\frac{1-p}{k}$$
		tends to $0$ as $k$ tends to $\infty$, we have $\lim_{k \rightarrow
			\infty}{\bf{H'_k}}=p{\bf{H_1}}+(1-p){\bf{H_2}}$.
		
		If $\bf{H_1}$ and $\bf{H_2}$ are themselves only achievable as limits
		of entropy vectors the above argument can be followed for each vector
		in the corresponding sequences tending to $\bf{H_1}$ and $\bf{H_2}$
		respectively and thus also holds for $\bf{H_1}$ and $\bf{H_2}$.  This
		shows that the closure of the set of entropy vectors is convex.
	\end{proof}
	
	The next lemma extends this beyond the case where there is a common
	observed ancestor.
	
	\begin{lem} \label{lem:2}
		For any causal structure $C^\rG$ the topological closure of the set of
		achievable entropy vectors of the observed variables is convex.
	\end{lem}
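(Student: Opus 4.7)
The plan is to reduce the general statement to Lemma~\ref{lem:1} by carefully extending its construction. Given achievable vectors $\mathbf{H_1}$ and $\mathbf{H_2}$ and a weight $p\in[0,1]$, the goal is to exhibit a sequence of vectors achievable in $C^\rG$ converging to $p\,\mathbf{H_1}+(1-p)\,\mathbf{H_2}$.

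First, I would partition the observed nodes into maximal classes under the equivalence relation ``shares at least one ancestor (observed or unobserved) in $C^\rG$.'' Observed variables in different classes have disjoint ancestral sets and are therefore independent, so joint entropies factor and the entropy vector splits as a direct sum over classes; it is consequently enough to prove convexity class by class.

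Within a single class I would reproduce the construction of Lemma~\ref{lem:1}: prepare every unobserved node $j$ in the state $Y'_j=(Y_j^1)^{\otimes k}\otimes(Y_j^2)^{\otimes k}$, and introduce a flag $A\in\{0,1,2\}$ with $P(A=0)=(k-1)/k$, $P(A=1)=p/k$, $P(A=2)=(1-p)/k$. Every observed node outputs the trivial value if $A=0$; otherwise it prepends $A$ to the $k$-fold strategy-$A$ measurement outcomes on the appropriate block of its unobserved parents. The calculation of Lemma~\ref{lem:1} then gives $H(A)\to 0$ and each coexisting-set entropy converges to $p\,H^1(S)+(1-p)\,H^2(S)$. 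The new ingredient, which is the heart of the argument, is distributing $A$ to every observed node in the class even though there need not be a common observed ancestor. Since classical subsystems can be embedded in any GPT and compose as products, I would encode $A$ in a classical subsystem carried by one of the shared unobserved ancestors, have the first observed descendant along each branch read it by a nondisturbing classical measurement, and propagate it downstream by appending it to the outputs of observed nodes so that their observed children can read it as a parent label.

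The main obstacle is that a class need not contain any single unobserved ancestor lying above every one of its observed nodes: the class may be connected only through intermediate observed nodes that have several unrelated unobserved parents (as in the instance with observed $X,Y,Z$ and independent unobserved $U_1,U_2$ where only $Y$ is a child of both). In such configurations the flag has to be routed along several paths, and consistency of its value at every observed node must be enforced by induction along a topological order of the class. Carrying out this bookkeeping cleanly---ensuring that no causal independence of $C^\rG$ is violated and that the limiting entropies match $p\,\mathbf{H_1}+(1-p)\,\mathbf{H_2}$ on every coexisting set of observed variables---is the step where I expect the real work to lie.
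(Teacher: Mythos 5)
There is a genuine gap, and you have correctly located it yourself: the routing of the flag $A$ to every observed node in a class cannot be completed within the causal structure $C^\rG$. Consider observed nodes $X,Y,Z$ with independent unobserved sources $U_1$ (parent of $X$ and $Y$) and $U_2$ (parent of $Y$ and $Z$). All three observed nodes lie in one of your equivalence classes, yet no node (observed or unobserved) is an ancestor of all of them. A flag encoded in a classical subsystem of $U_1$ can reach $X$ and $Y$, but it can never reach $Z$: $Z$ is not a descendant of $X$ or $Y$, and $U_2$ cannot carry a copy of the flag correlated with $U_1$'s without violating the independence of the two sources that the causal structure demands. Using two independent flags does not work either, since the cross terms (one flag active, the other not, or both active but disagreeing) contribute at the same order as the diagonal terms and spoil the limit $p\,\mathbf{H_1}+(1-p)\,\mathbf{H_2}$ on joint entropies such as $H(XYZ)$. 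So the ``bookkeeping'' you defer is not bookkeeping; the construction is impossible inside $C^\rG$. (Your preliminary reduction to ancestrally connected classes is sound but does not help, since the obstruction already occurs within a single class.)

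The idea you are missing is that the paper does not realise the flag inside $C^\rG$ at all. Instead it passes to a \emph{larger} causal structure: attach a fresh observed parent $A_i$ to each observed node $X_i$ that has no observed ancestor (giving $C'$), then add edges $A_1\to A_i$ for $i\geq 2$ (giving $C''$), so that $A_1$ is a common observed ancestor of everything and Lemma~\ref{lem:1} applies to $C''$. The distributions compatible with $C^\rG$ are exactly the marginals of distributions compatible with $C''$ that additionally satisfy the linear constraints $I(A_1:A_i)=0$; intersecting the convex set of $C''$-achievable entropy vectors with these hyperplanes and projecting onto the $X$-coordinates both preserve convexity. In short, the resolution is to recover the entropy region of $C^\rG$ as a convexity-preserving image of the entropy region of an auxiliary structure, not to implement the mixing strategy within $C^\rG$ itself.
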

	
	\begin{proof}
		If all observed variables in $C^\rG$ have a common observed
		ancestor, the statement follows by Lemma~\ref{lem:1}.  Otherwise,
		there are $1 < l \leq n$ observed nodes without any observed
		ancestors, which we label $X_1, \ldots, X_l$ (all other observed
		nodes ($X_{l+1},\ldots,X_n$) are descendants of at least one of
		these nodes). We construct a larger causal structure $C'$ by
		introducing an observed parent node $A_i$ for each $X_i$ with
		$i=1,\ldots,l$, where $A_i$ has no direct link to any variable
		except for $X_i$. Note that a distribution over the observed
		variables $X_1, \ldots, X_n$ is compatible with $C^\rG$ if and only
		if it is the marginal of a distribution over
		$X_1, \ldots, X_n, A_1, \ldots, A_l$ that is compatible with $C'$.
		
		Now let $C''$ be another causal structure that is constructed from
		$C'$ by adding a directed link from $A_1$ to all other $A_i$ with
		$2\leq i \leq l$. A distribution over
		$X_1, \ldots, X_n, A_1, \ldots, A_l$ is compatible with $C'$ if and
		only if it is compatible with $C''$ and, at the same time, obeys
		$I(A_1:A_i)=0$ for all $2\leq i\leq l$.
		
		The ``if'' condition follows
		because any distribution in $C'$ obeys $I(A_1:A_i)=0$ for all
		$2\leq i\leq l$ and it can be realised in $C''$ without using the
		additional causal links $A_1\rightarrow A_i$.
		
		For the ``only if'', we use that $I(A_1:A_i)=0$ holds if and only if
		$p(a_i a_1)=p(a_i)p(a_1)$\footnote{This can for instance be seen by
			writing the mutual information in terms of the relative entropy
			and using Klein's inequality~\cite{Klein1931}.}, so that any
		distribution in $C''$ obeying $I(A_1:A_i)=0$ for all
		$2\leq i\leq l$ can be written as
		\begin{equation}
		p(x_1, \ldots, x_n,a_1, \ldots, a_l)=p(x_{l+1},\ldots,
		x_n|x_1,\ldots,x_l)p(x_1 \ldots x_l|a_1 \ldots a_l)p(a_1) \ldots p(a_l) \, ,
		\end{equation}
		with the right hand side compatible with $C'$.
		
		Hence, a distribution over $X_1, \ldots, X_n$ is compatible with $C^\rG$ if and only if it is the marginal of a distribution over $X_1, \ldots X_n, A_1, \ldots, A_l$ that is compatible with $C''$ and obeys $I(A_1:A_i)=0$ for all $2\leq i\leq l$. 
		
		The closure of the set of entropy vectors of the observed variables
		that are compatible with $C''$ (without any additional constraints) is
		convex by Lemma~\ref{lem:1}. The closure of the set of achievable
		entropy vectors in $C^\rG$ is the closure of the set of achievable
		entropy vectors in of $C''$ restricted by the linear equalities
		$I(A_1:A_i)=0$ for all $2\leq i\leq l$ and projected to the marginals
		involving only $X_1, \ldots, X_n$.  Because these operations preserve
		convexity, the closure of the set of achievable entropy vectors of
		$C^\rG$ is convex. 
	\end{proof}
	
	The main theorem of this section now follows as a corollary.\smallskip

	\begin{proof}[Proof of Theorem~\ref{thm:convexity}]
		Convexity of the set of achievable entropy vectors follows by
		Lemma~\ref{lem:2}. That it is a cone follows because if ${\bf H}$ is
		an achievable entropy vector, then $k{\bf H}$ for $k\in\N$ is
		achievable by taking $k$ independent copies of all systems in the
		strategy achieving ${\bf H}$.  Furthermore, in any causal structure
		$C^\rG$, ${\bf H}={\bf 0}$ is achievable by taking all observed
		variables to be $0$ with probability 1.  Hence, by taking
		convex combinations, if ${\bf H}$ is achievable, so is
		$\lambda{\bf H}$ for any $\lambda\in\R_{\geq 0}$.
	\end{proof}
	
	Corollary~\ref{cor:convexity} then follows in a similar way.

	\begin{proof}[Proof of Corollary~\ref{cor:convexity}]
		Consider first postselecting on one of the parentless variables in $C^\rG$, and suppose that this variable has $k$ possible values. Let $X_1,\ldots,X_n$ be the set of all the observed descendants of the variable that has been postselected on and $Y_1,\ldots,Y_m$ be the set of all other observed nodes.  In other words, for a fixed distribution, $P_{Y_1,\ldots,Y_m}$, of all other observed nodes we consider $k$ different ways to form $X_1,\ldots,X_n$ to give $P_{X_1^1 \cdots X_{n-1}^1 X_n^1Y_1^1,\cdots,Y_m^1}, \ldots, P_{X_1^k \cdots X_{n-1}^k X_n^kY_1^k,\cdots,Y_m^k}$ respectively. For these distributions we define an entropy vector with $k(2^{n+m}-1)$ components by concatenating the entropy vectors of each of them.  Since the marginal distributions obey $P_{Y_1^1\cdots Y_m^1}=\cdots=P_{Y_1^k \cdots Y_m^k}$, $(k-1)(2^m-1)$ components can be removed from the vector.  If ${\bf{H_1}}$ and ${\bf{H_2}}$ are two such achievable entropy vectors, then for any $0 \leq p \leq 1$, $p{\bf{H_1}}+(1-p){\bf{H_2}}$ is also such an achievable entropy vector. This follows by applying the technique used to prove Lemma~\ref{lem:2} separately to the causal structure including only one of the $k$ alternatives (i.e., the causal structure formed from the post-selected causal structure by removing all nodes associated with other alternatives). This strategy leads to the same distribution on $Y_1,\ldots,Y_m$ for each alternative and hence the overall entropy vector has the same entropies for all subsets of $\{Y_1,\ldots,Y_m\}$.
		We can then postselect on further parentless variables in a similar way.
	\end{proof}

	\section{Proof of Proposition~2} \label{app:quantumlemma}
	
	We use $X_1, \ldots, X_n$ for the observed variables and
	$Y_1, \ldots, Y_m$ for the unobserved nodes in $C$.  For each
	unobserved node $Y_i$ we use $Y_i^j$ with $1 \leq j \leq k_i$ for
	the subsystems associated with the $k_i$ outgoing edges, sometimes
	using $Y=\{Y_i^j\}_{i=1,j=1}^{m,\ \ k_i}$ and $X=X_1,\ldots,X_n$ as a
	shorthand. For any unobserved node $Y_i$ with $1 \leq i \leq m$ and
	for any $1 \leq j \leq k_i$, we show how to modify $Y_i^j$ to
	${\tilde{Y}_i^j}$ such that, if ${\tilde{Y}_i^j}$ is shared along
	the $j^{\textrm{th}}$ outgoing edge instead of ${{Y}_i^j}$, the same
	distributions among the observed variables are obtained. This
	construction of ${\tilde{Y}_i^j}$ will make all conditional
	entropies of unobserved systems positive.
	
	If $H({Y}_i^j| Y \cup X \setminus {Y}_i^j) \geq 0$, we set
	$\tilde{Y}_i^j={Y}_i^j$. Otherwise, we let
	$\tilde{Y}_i^j=Y_i^j \otimes \alpha_i^j$, where
	$\alpha_i^j=\sum_a p_i^j(a) \ketbra{a}{a}$
	is a system that is uncorrelated with any other system and obeys
	$H(\alpha_i^j)=H(Y_i^j)$. 
	Then, $H(\tilde{Y}_i^j)=H({Y}_i^j)+H(\alpha_i^j)$ and
	$\tilde{Y}_i^j$ can be used to produce the same observed
	distributions as ${Y}_i^j$, since $\alpha_i^j$ may be ignored when
	processing the unobserved systems $\tilde{Y}_i^j$ to obtain observed
	variables.  Furthermore, due to the independence of the $\alpha_i^j$
	and by weak monotonicity, for any $\tilde{Y}_i^j$ and any set of
	variables, $S$, coexisting with $\tilde{Y}_i^j$,
	\begin{align}
	H(\tilde{Y}_i^j | S)=H({Y}_i^j| S)+H(\alpha_i^j )\geq - H(Y_i^j) + H(\alpha_i^j)=0, \label{eq:onevar}
	\end{align}
	where the last equality follows by construction. (Note that for an
	observed variable $X$ and a set $S$ coexisting with $X$, the
	analogous relation $H(X|S) \geq 0$ already holds.)
	
	We now show that for any two coexisting sets $S,T \subset U$ with $S \cap T = \emptyset$ and where $U$ is a maximal coexisting set, the conditional entropy $H(S|T)$ is positive. First of all, by strong subadditivity,
	\begin{equation}  \label{eq:subs}
	H(S|T) \geq H(S| U \setminus S).
	\end{equation}
	Positivity of $H(S| U \setminus S)$ can be shown inductively in the cardinality of $S$. For cardinality $1$ this is implied by~\eqref{eq:subs} and~\eqref{eq:onevar}. Assuming that this holds for any set with cardinality $q$, the following shows that it also holds for any set with cardinality $q+1$. Let there be a set of variables $S \subseteq U$ of a maximal coexisting set $U$ with cardinality $q+1$ and a one element subset $S_1 \subseteq S$,  
	Writing $S=S_1\bar{S}_1$, then
	\begin{align*} 
	H(S|U\setminus S) &=H(S_1|U\setminus S)+H(\bar{S}_1|(U \setminus S) \cup S_1) \\
	&=H(S_1|U \setminus S)+H(\bar{S}_1|U \setminus \bar{S}_1),
	\end{align*} 
	which is at least 0 by the inductive hypothesis. It then follows from
	\eqref{eq:subs} that $H(S|T) \geq 0$ for all $T \subseteq (U \setminus
	S)$.

	\section{Remarks on the quantum entropy vector method} \label{app:quantum_entropy_vectors}
	\label{app:quantum}
	
	This appendix gives additional information regarding the role of Proposition~2 for the quantum entropy vector method introduced in~\cite{Chaves2015} (see also~\cite{review} for a review). For completeness, we first briefly introduce the details of this method.
	
	The \emph{quantum entropy vector method} is based on the von Neumann
	entropy. For any joint state of coexisting systems associated with
	some of the nodes (and edges) of a causal structure a joint entropy
	can be defined, where the notion of coexisting sets is the one
	discussed for the measurement entropy in the main text. However, the
	quantum method does not take the conditional entropies as separate variables (these would be redundant because $H(X|Y)=H(XY)-H(Y)$). For all variables within a coexisting set, the following inequalities hold,\footnote{In both inequalities $X$, $Y$, $Z$ are all coexisting and may each be made up of subsystems associated with several nodes.}
	\begin{itemize}
		\item Strong subadditivity: For any state $\rho_{XYZ}$, $H(XYZ)+H(Z) \leq H(XZ)+H(YZ)$.
		\item Weak monotonicity: For any state $\rho_{XYZ}$, $H(X|Y)+H(X|Z) \geq 0$.
	\end{itemize}
	Note that whenever there is no entanglement between two subsystems $X$
	and $Y$ of a state $\rho_{XY}$, the stronger monotonicity statement
	$H(X|Y)\geq0$ holds. Since it is always possible to purify an
	unobserved quantum state $\rho_{A_1 \cdots A_n}$, we can impose the
	following.
	\begin{itemize}
		\item Purification for unobserved systems: For an unobserved system in
		state $\rho_{A_1 \cdots A_n}$ we can take $H(A_1 \cdots A_n)=0$ and
		for any subsystem $S \subset \left\{A_1, \ldots, A_n \right\}$ we
		can take
		$H(S)={H( \left\{A_1, \ldots, A_n \right\} \setminus S)}$.\footnote{This can be seen by considering the Schmidt
			decomposition of the purified state~\cite{review}.}
	\end{itemize}
	Among the variables of different coexisting sets data processing inequalities hold. 
	\begin{itemize}
		\item Data Processing: Let $\rho_{X Y}$ be the joint state of two 
		sets of coexisting nodes $X$ and $Y$ and let $\mathcal{E}$ be a
		completely positive trace preserving map taking $Y$ to $Z$ such
		that $(\mathcal{I} \otimes \mathcal{E}) (\rho_{XY})=\rho_{XZ}$, then
		$H(X|Y) \leq H(X|Z)$.\footnote{For a discussion on which data processing inequalities are  relevant for computing constraints on entropy vectors we refer to~\cite{review}.}
	\end{itemize}
	In addition, the causal structure will in general imply independence constraints among observed as well as among unobserved systems. These are based on the notion of \emph{d-separation}: for three pairwise disjoint sets of variables $X$, $Y$ and $Z$, $X$ and $Y$ are d-separated by $Z$, if $Z$ blocks any path from any node in $X$ to any node in $Y$. A path is blocked by $Z$, if it contains one of the following: $i \rightarrow z  \rightarrow j$ or $i \leftarrow z \rightarrow j$ for   nodes $i$, $j$ and a node $z \in Z$ in that path, or if it contains $i \rightarrow k \leftarrow j$, where $k \notin Z$. Note that it is possible that $Z=\emptyset$.
	\begin{itemize}
		\item Independences (following Theorem~22 (i) from~\cite{Henson2014}): For three pairwise disjoint sets of observed variables, $X$, $Y$ and $Z$, if $X$ and $Y$ are d-separated by $Z$,
		then $H(X|Z)=H(X|YZ)$. (Note that $Z=\emptyset$ is allowed.)
	\end{itemize}
	
	We show with the following Lemma that weak monotonicity constraints
	are not relevant in this approach when considering causal structures
	where none of the unobserved nodes have any parents. These are the scenarios that are usually considered in the literature.
	
	\begin{lem} \label{lem:weak-mono}
		For any causal structure $C^\rQ$ in which the unobserved quantum nodes do not have any parents, all weak monotonicity inequalities are implied by the other inequalities, i.e., for any two coexisting sets $S_1, S_2$ with $S_1 \cap S_2 \neq \emptyset$,
		\begin{align}
		H(S_1 \cap S_2 | S_1 \setminus S_2) + H(S_1 \cap S_2 | S_2 \setminus S_1) \geq 0 
		\end{align}
		is redundant. 
	\end{lem}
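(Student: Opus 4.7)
The plan is to reduce the weak monotonicity inequality to a version at the level of the outgoing source subsystems, where it follows directly from strong subadditivity and the purification axiom, and then to transfer it back via data processing.

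Write $A = S_1 \cap S_2$, $B = S_1 \setminus S_2$, $C = S_2 \setminus S_1$, so that the inequality to prove is $H(A|B) + H(A|C) \geq 0$. For each observed variable $X$ appearing in $A \cup B \cup C$, let $Q_X$ denote the quantum subsystem on which the measurement at $X$ acts. Because the unobserved nodes have no parents, each $Q_X$ is an outgoing subsystem of some source, and the $Q_X$ arising from distinct observed variables are distinct source subsystems. Let $\tilde A, \tilde B, \tilde C$ be obtained from $A, B, C$ by replacing each observed variable with its $Q_X$ (leaving unobserved entries unchanged); then $\tilde A \cup \tilde B \cup \tilde C$ consists entirely of source outgoing subsystems, which mutually coexist.

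Next I would establish the source-level weak monotonicity $H(\tilde A|\tilde B) + H(\tilde A|\tilde C) \geq 0$. Let $\tilde D$ collect the remaining outgoing source subsystems. Applying the purification axiom source-by-source makes the joint state on $\tilde A \tilde B \tilde C \tilde D$ pure, so $H(\tilde A \tilde B \tilde C \tilde D) = 0$ and hence the complementary identities $H(\tilde B \tilde D) = H(\tilde A \tilde C)$ and $H(\tilde A \tilde B \tilde D) = H(\tilde C)$. These give $H(\tilde A|\tilde B \tilde D) = H(\tilde C) - H(\tilde A \tilde C) = -H(\tilde A|\tilde C)$, and combining with strong subadditivity $H(\tilde A|\tilde B \tilde D) \leq H(\tilde A|\tilde B)$ yields the source-level inequality.

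To transfer this back I would apply data processing twice. Since each observed variable $X$ is produced from $Q_X$ by a CPTP map (the measurement at $X$), replacing $\tilde B$ by $B$ (resp.\ $\tilde C$ by $C$) in the conditioning position applies a CPTP to the condition, so data processing gives $H(A|B) \geq H(A|\tilde B)$ and $H(A|C) \geq H(A|\tilde C)$. To replace $\tilde A$ by $A$ in the conditioned position I would invoke that a projective measurement applied to one part of a bipartite state cannot decrease its joint entropy (the dephasing $\rho \mapsto \sum_k P_k \rho P_k$ is entropy non-decreasing), so that $H(A|\tilde B) \geq H(\tilde A|\tilde B)$ and analogously on the $C$ side. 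Chaining these inequalities gives $H(A|B) + H(A|C) \geq H(\tilde A|\tilde B) + H(\tilde A|\tilde C) \geq 0$.

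The hardest step will be the second data-processing replacement, namely justifying $H(A|\tilde B) \geq H(\tilde A|\tilde B)$ in full generality. For destructive classical-output measurements this is immediate from the dephasing inequality, but when observed nodes are described by general quantum instruments producing both classical outcomes and quantum outputs fed to descendants, one has to dilate each instrument via Stinespring and absorb the environment into the existing source-purification structure, using independence of the sources supplied by the no-parents assumption. I expect this bookkeeping is where the bulk of the technical work of the proof lies.
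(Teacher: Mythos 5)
Your overall architecture matches the paper's: reduce weak monotonicity to the level of the parentless sources, where purifying each (independent) source makes the global source state pure and strong subadditivity then gives $H(\tilde A|\tilde B)+H(\tilde A|\tilde C)\geq 0$, and transfer back by data processing on the conditioning systems. Those two pieces are sound and correspond to Cases~1 and~2 of the paper's proof. The genuine gap is the step you yourself flag as the hardest: replacing $\tilde A$ by $A$ in the \emph{conditioned} position, i.e.\ $H(A|\tilde B)\geq H(\tilde A|\tilde B)$. This inequality is false in general. Take $\tilde A$ and $\tilde B$ independent with $\tilde A$ maximally mixed, and let the observed node perform a coarse measurement (in the extreme, a single-outcome one); then $H(A|\tilde B)=H(A)$ can be strictly smaller than $H(\tilde A|\tilde B)=H(\tilde A)>0$. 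The dephasing argument only covers rank-one projective measurements on all of $Q_X$, which is not the generic situation in a causal structure, and the Stinespring dilation does not rescue it: an isometry gives $H(AE|\tilde B)=H(\tilde A|\tilde B)$, but discarding the environment $E$ can strictly \emph{decrease} the conditional entropy, so $H(A|\tilde B)\geq H(AE|\tilde B)$ does not follow.

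The repair is not to ``un-measure'' the observed variables in $S_1\cap S_2$ but to peel them off. Write $S_1\cap S_2=R_1\cup R_2$ with $R_1$ the unobserved part and $R_2$ the observed (classical) part. The chain rule gives $H(S_1\cap S_2|S_1\setminus S_2)=H(R_1|S_1\setminus S_2)+H(R_2|S_1\setminus R_2)$, and the second term is non-negative because $R_2$ is classical (the conditional entropy of a cq-state whose conditioned system is classical is positive). This reduces the claim to a weak-monotonicity instance whose intersection consists only of source subsystems, to which your data-processing and purification arguments apply verbatim. This is precisely Case~3 of the paper's proof; with that substitution in place of your $H(A|\tilde B)\geq H(\tilde A|\tilde B)$ step, the argument goes through.
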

	
	\begin{proof}
		Let $A$ denote the collection of subsystems of all unobserved nodes
		and $S$ be the maximal coexisting set that includes all unobserved
		systems $A$. 
		Then for the coexisting sets $S_1, S_2$ with
		$S_1 \cap S_2 \neq \emptyset$ we divide into three cases.
		
		\noindent
		\textit{Case~1:} $S_1, S_2 \subseteq S$.  Let $R_1=S_1 \cap A$ and $R_2=S_2 \cap A$. We use the purification of unobserved systems to rewrite,\footnote{As the unobserved nodes do not have any parents the systems associated with a particular node are independent of those associated with another node as well as independent of other variables coexisting with them.}
		\begin{align}
		H&(S_1 \cap S_2 | S_1 \setminus S_2) \! + \! H(S_1 \cap S_2 | S_2 \setminus S_1) \nonumber \\
		&\! \! \! \! \! \! \! {\small =} \! \! H(R_1) \! - \! H(R_1 \! \setminus \! R_2) \! + \! H(S_1 \! \setminus \! R_1) \! - \! H( (S_1 \! \setminus \! R_1) \! \! \setminus \! S_2) \! + \! H(R_2) \! - \! H( R_2 \! \setminus \! R_1) \! + \! H(S_2 \! \setminus \! R_2) \! - \! H( (S_2 \! \setminus \! R_2) \! \! \setminus \! S_1)\nonumber \\
		&\! \! \! \! \! \! \! {\small =}  \! \! H(R_1) \! \! - \! \! H(R_1 \! \setminus \! R_2) \! \! + \! \! H(A \! \setminus \! R_2) \! \! - \! \! H(A \! \setminus \! \! ( R_2 \! \setminus \! R_1)) \! \! + \! \! H((S_1 \! \cap \!
		S_2)\! \! \setminus\! A | (S_1 \!\setminus \! S_2) \! \! \setminus \! A) \! \! + \! \!H((S_1 \! \cap \!	S_2)\!\setminus\! \! A|(S_2 \! \setminus \! S_1) \! \! \setminus \! A) , \label{eq:str}
		\end{align}
		where to obtain the first equality we used that $\rho_{S_1}=\rho_{R_1}
		\otimes \rho_{S_1 \setminus R_1}$ and $\rho_{S_2}=\rho_{R_2} \otimes
		\rho_{S_2 \setminus R_2}$ and $R_1 \setminus S_2=R_1 \setminus R_2$
		and $R_2 \setminus S_1=R_2 \setminus R_1$; in the second line we used
		the purity of $\rho_A$.
		The last two terms in~\eqref{eq:str} are positive because these are
		classical conditional entropies (none of the sets contain elements of
		$A$).  The sum of the remaining four terms is then positive by strong
		subadditivity. 
		
		\noindent
		\textit{Case~2: $S_1 \cap S_2 \subseteq S$ and either $S_1 \not\subseteq S$
			or $S_2 \not\subseteq S$, or both.}  In this case we use data-processing to give
		\begin{align}
		H(S_1 \cap S_2 | S_1 \setminus S_2) + H(S_1 \cap S_2 | S_2 \setminus S_1) \geq  H(S_1 \cap S_2 | T_1) + H(S_1 \cap S_2 | T_2),
		\end{align}
		where $T_1,T_2 \subseteq S$ are the sets of variables that are
		processed to $S_1 \setminus S_2$ and $S_2 \setminus S_1$ respectively.
		Since $S_1 \cap S_2\subseteq S$, $S_1 \cap S_2=T_1\cap T_2$ and so
		positivity of the remaining expression follows using Case~1.
		
		\noindent
		\textit{Case~3:} $S_1 \cap S_2 \not\subseteq S$. In this case we can find $R_1 \subseteq S$ and $R_2$ with $R_2 \cap S=\emptyset$ such that $S_1 \cap S_2= R_1 \cup R_2$, and rewrite 
		\begin{align}
		H(S_1 \! \cap \! S_2 | S_1\! \setminus\! S_2) \!  + \!  H(S_1 \!  \cap \!  S_2 | S_2 \!\setminus\! S_1) 
		\! = \!  H(R_1 | S_1 \!\setminus\! S_2) \! + \! H(R_2 | S_1 \!\setminus\! R_2) \!  + \!  H(R_1 | S_2 \!\setminus\! S_1) \! + \! H(R_2 | S_2\!\setminus\! R_2)\,.
		\end{align} 
		The second and fourth terms are positive since $R_2$ is classical. The
		first and third terms correspond to a weak monotonicity inequality
		like that considered in case Case~2, and so their sum is also
		positive.
	\end{proof}
	
	By Proposition~2, instead of purifying the unobserved
	systems and dropping weak monotonicity, we could alternatively replace
	all weak monotonicity constraints by monotonicity (doing so prevents
	us from purifying the unobserved quantum systems). The question then
	arises as to the implications of each for deriving new entropy
	inequalities for the observed variables. The following lemma shows
	that the quantum approach outlined in this section (which takes the
	purification of unobserved systems into account) leads to entropy
	inequalities that are at least as tight as those obtained by
	considering monotonicity instead.

	Constraints on the observed variables are usually derived starting
	from:
	\begin{enumerate}[(1)]
		\item The Shannon constraints for the observed variables.
		\item All independences among observed and unobserved variables that
		$C$ implies.
		\item Data processing inequalities.
		\item Positivity of all entropies.
		\item Positivity of the conditional mutual information (strong subadditivity).
		\item Positivity of the conditional entropy (monotonicity) between
		subsets that cannot be entangled.
		\item \label{it:in1} Weak monotonicity between subsets that can be
		entangled.
	\end{enumerate}
	Proposition~2 implies that we can add\medskip
	
	\noindent(8) Monotonicity between subsets that can be
	entangled.\medskip
	
	\noindent and Lemma~\ref{lem:weak-mono} implies that in the case where all the
	quantum nodes are parentless, instead of (8) we can add\medskip
	
	\noindent($8'$) The unobserved systems originating at a node are
	in a pure state, e.g., for a node $A$ with subsystems $A_1, A_2$
	they obey $H(A_1, A_2)=0$ and $H(A_1)=H(A_2)$.
	
	\begin{lem}\label{lem:different_quant}
		Let $C^\rQ$ be a causal structure in which each unobserved node is
		parentless and has at most two children.  Consider starting with the
		constraints (1)--(\ref{it:in1}) and performing a Fourier-Motzkin
		elimination to give a set of constraints on the entropies of
		observed variables.  Call the resulting cone $\Gamma$.  Consider
		also $\Gamma_2$ formed analogously but with (8) in addition to
		(1)--(\ref{it:in1}) and likewise $\Gamma_2'$ with ($8'$) in addition
		to (1)--(\ref{it:in1}). We have
		$\Gamma\subseteq\Gamma_2$ and $\Gamma\subseteq\Gamma_2'$.
	\end{lem}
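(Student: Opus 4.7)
The trivial inclusions $\Gamma_2\subseteq\Gamma$ and $\Gamma_2'\subseteq\Gamma$ hold because adding linear constraints to the pre-projection cone can only shrink its Fourier--Motzkin image. The substantive content is the reverse inclusions: given any $H\in\Gamma$ with some extension $\tilde H$ satisfying $(1)$--$(7)$, I need to produce modified extensions satisfying additionally $(8)$ or $(8')$ and having the same projection onto observed coordinates. Each modification mimics a standard quantum operation on the unobserved systems but is carried out directly at the level of entropy vectors.

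\textbf{Step 1: $\Gamma\subseteq\Gamma_2$.} I reuse the classical-auxiliary construction from the proof of Proposition~\ref{lem:quantumlemma}. For each unobserved subsystem $Y_i^j$, tensor on an independent classical auxiliary of entropy $\tilde H(Y_i^j)$; at the entropy-vector level this becomes
\[
\tilde H'(S)\;=\;\tilde H(S)\;+\;\sum_{Y_i^j\in S\cap U}\tilde H(Y_i^j),
\]
where $U$ is the collection of all unobserved subsystems. The projection onto observed variables is unchanged since $S\cap U=\emptyset$ for observed subsets. Constraints $(1)$, $(2)$, $(4)$, $(5)$ and $(7)$ transfer to $\tilde H'$ by a direct linear calculation, as does $(3)$ once one checks that the correction terms cancel between the two sides of any data-processing inequality for this causal structure. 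For $(8)$ I prove $\tilde H'(X\mid U\setminus X)\geq 0$ by induction on $|X|$: the base case of a single unobserved $Y_i^j$ uses weak monotonicity of $\tilde H$ (specifically $\tilde H(Y_i^j\mid U\setminus Y_i^j)\geq -\tilde H(Y_i^j)$) to give $-\tilde H(Y_i^j)+\tilde H(Y_i^j)=0$, and the inductive step factors $\tilde H'(X\mid U\setminus X)=\tilde H'(X'\mid U\setminus X')+\tilde H'(Z\mid U\setminus X)$ via the chain rule, applies the inductive hypothesis to the first summand, and bounds the second below by $\tilde H'(Z\mid U\setminus Z)\geq 0$ using strong subadditivity. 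Positivity of $\tilde H'(X\mid Y)$ for arbitrary coexisting $X,Y$ then follows from $\tilde H'(X\mid Y)\geq \tilde H'(X\mid U\setminus X)$ by strong subadditivity applied to $\tilde H'$.

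\textbf{Step 2: $\Gamma\subseteq\Gamma_2'$.} I use the same template but replace the classical auxiliary by a purifier of each unobserved node. For a $k_i=2$ node with subsystems $(Y_i^1,Y_i^2)$, adjoin a hidden purifying subsystem (attached, say, to the first output) and redefine $\tilde Y_i^1=Y_i^1\otimes R_i$, $\tilde Y_i^2=Y_i^2$ so that the modified joint state on $\tilde Y_i^1\tilde Y_i^2$ is pure. The resulting vector automatically satisfies $\tilde H'(\tilde Y_i^1\tilde Y_i^2)=0$ and $\tilde H'(\tilde Y_i^1)=\tilde H'(\tilde Y_i^2)$, hence $(8')$; observed entropies are unchanged because the purifier is unmeasured; and $(1)$--$(7)$ survive because the modification corresponds to a bona fide quantum operation (and by Lemma~\ref{lem:weak-mono}, weak monotonicity is even implied and need not be separately verified). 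The hypothesis $k_i\leq 2$ is essential here: a bipartite marginal is purified uniquely up to local isometries via the Schmidt decomposition, so the induced values on all sets coupling $Y_i^1,Y_i^2$ with the remaining variables are fixed by the marginal $\tilde H$-values.

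\textbf{Main obstacle.} The genuinely non-trivial piece is the induction establishing $(8)$ for the auxiliary-tensored vector in Step~1; this must chain together the chain rule, strong subadditivity, and weak monotonicity of $\tilde H'$, and verifying that each of these properties survives the modification is routine linear bookkeeping but must be done carefully. In Step~2, the main subtlety is to formulate the purification modification at the entropy-vector level so that the observed-variable entries are provably unaffected even when coexisting sets involve one of $Y_i^1,Y_i^2$; the at-most-two-children restriction is precisely what makes this bookkeeping tractable, since with three or more outgoing edges the purifier would have to be distributed across subsystems and the induced unobserved entropies would no longer be forced by $\tilde H$ alone.
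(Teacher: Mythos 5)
Your overall strategy is genuinely different from the paper's. The paper proves this lemma by a purely syntactic Fourier--Motzkin analysis: it uses the independence relations to reduce every inequality involving $A_1,A_2$ to one of a small number of normal forms, and then checks that the extra inequalities supplied by (8) (namely $H(A_1A_2)\geq H(A_1)$ and $H(A_1A_2)\geq H(A_2)$) or by ($8'$) produce only redundant consequences once $H(A_1A_2)$, $H(A_1)$ and $H(A_2)$ are eliminated. You instead propose a lifting argument: every point of the cone cut out by (1)--(\ref{it:in1}) should be modified, without changing its observed coordinates, into a point that additionally satisfies (8) or ($8'$). That is a legitimate alternative route to $\Gamma\subseteq\Gamma_2$ and $\Gamma\subseteq\Gamma_2'$, and your Step~1 is essentially workable: the map $\tilde H\mapsto\tilde H'$ with $\tilde H'(S)=\tilde H(S)+\sum_{Y_i^j\in S\cap U}\tilde H(Y_i^j)$ is a well-defined linear operation on abstract entropy vectors, it adds the same correction $\sum_{Y_i^j\in X\cap U}\tilde H(Y_i^j)$ to both sides of every data-processing, strong-subadditivity and weak-monotonicity constraint, and the needed instances of (8) reduce (via the independence relations) to $\tilde H'(A_1|A_2)=\tilde H(A_1|A_2)+\tilde H(A_1)\geq 0$, which is exactly a weak-monotonicity inequality for $\tilde H$. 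This is the same trick as in the proof of Proposition~\ref{lem:quantumlemma}, transplanted to the vector level, and it buys a shorter argument than the paper's case analysis for the $\Gamma\subseteq\Gamma_2$ half.

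Step~2, however, has a genuine gap. The elements of the pre-projection cone are \emph{abstract} vectors satisfying a list of linear inequalities; they need not be entropy vectors of any quantum state (that is the whole point of the outer-approximation method). Consequently you cannot ``adjoin a hidden purifying subsystem'', invoke the Schmidt decomposition, or argue that ``(1)--(7) survive because the modification corresponds to a bona fide quantum operation'': there is no state to purify. To make Step~2 rigorous you would have to write down, purely in terms of the numbers $\tilde H(S)$, the new values $\tilde H'(S)$ for every coexisting set $S$ containing $A_1$ and then verify (1)--(6) by hand. This is not forced by the original vector: for a coexisting set such as $\{A_1,Z\}$ where $Z$ is an observed descendant of $A_2$ (so $A_1$ and $Z$ are correlated and no independence relation fixes $H(A_1Z)$), the value that ``attaching the purifier to $A_1$'' would assign is simply not a function of $\tilde H$. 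Worse, the naive assignment $\tilde H'(A_1A_2)=0$ with $\tilde H'(A_1Z)=\tilde H(A_1Z)$ can violate weak monotonicity: the constraint $H(A_1|Z)+H(A_1|A_2)\geq 0$ becomes $\tilde H(A_1Z)-\tilde H(Z)\geq \tilde H'(A_2)$, which is strictly stronger than what $\tilde H$ is guaranteed to satisfy whenever $\tilde H(A_1A_2)>0$. So the $\Gamma\subseteq\Gamma_2'$ half is not established by your argument; here the paper's elimination-based bookkeeping (showing that~\eqref{eq:n1} and~\eqref{eq:n2} only ever reproduce inequalities already implied without them) does real work that your proposal does not replace.
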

	
	In other words, including either (8) or ($8'$) does not give a tighter
	approximation on the set of achievable entropy vectors of the observed
	variables.

	\begin{proof}
		For any unobserved node $A$ in $C$, the subsystems $A_1$ and $A_2$ only occur jointly in coexisting sets whose state can be written as $\rho_{A_1A_2} \otimes \rho_{R}$ where $R$ contains all other systems in that coexisting set. This implies that for any $R_1 \subseteq R$,
		\begin{align} \label{eq:ind}
		H(A_1 A_2 R_1) &=H(A_1 A_2)+H(R_1) \\ 
		H(A_1 R_1) &=H(A_1)+H(R_1) \\ 
		H(A_2 R_1) &=H(A_2)+H(R_1). \label{eq:ind3}
		\end{align}
		and hence that $H(A_1 A_2 R_1)$, $H(A_1 R_1)$ and $H(A_2 R_1)$ can be
		eliminated from all valid inequalities. Strong-subadditivity and
		monotonicity inequalities that include any of these three are
		redundant since they decompose into terms that only involve $A_1$ and
		$A_2$ and terms that do not involve those variables, both of which are
		separately implied by another valid inequality.\footnote{Since
			monotonicity is only included for cq-states, the reduction does not
			lead to any decompositions that would require positivity of
			$H(A_1|A_2)$ or $H(A_2|A_1)$ to be implied.}

		The remaining types of inequalities involving both $A_1$ and $A_2$ are
		those with the form (up to exchange of $A_1$ and $A_2$)
		\begin{align}
		H(A_1)+H(A_2) &\geq H(A_1 A_2) \label{eq:nr2}\\
		H(T_1|A_1 S_2 R_2') &\geq H(T_1| A_1 A_2 R_2)  \label{eq:nr3} \\ 
		H(A_1 T_1|S_2 R_2') &\geq H(A_1 T_1| A_2 R_2)\,, \label{eq:nr4} 
		\end{align}
		where $R_1, R_2, T_1 \subseteq R$, $S_2$ is obtained by processing
		$A_2$ and $R_2$, the set $T_1$ coexists with $S_2$, and $R_2'$ is the
		subset of all observed variables in
		$R_2$. 
		
		By~\eqref{eq:ind} and~\eqref{eq:ind3}, the inequalities of types~\eqref{eq:nr3} and~\eqref{eq:nr4} are equivalent to
		\begin{align}
		H(T_1|A_1 S_2 R_2') &\geq H(T_1| R_2), \label{eq:nr5a} \\ 
		H(A_1 T_1|S_2 R_2') &\geq H(T_1| R_2)+ H(A_1A_2) - H(A_2) \label{eq:nr5}.
		\end{align}  
		For (8) we have the additional inequalities 
		\begin{align}
		H(A_1 A_2) &\geq H(A_1) \label{eq:nr1a} \\
		H(A_1 A_2) &\geq H(A_2) \label{eq:nr1b} 
		\end{align}
		and for ($8'$),
		\begin{align}
		H(A_1 A_2) &=0 \label{eq:n1} \\
		H(A_1) &= H(A_2). \label{eq:n2}
		\end{align}
		
		In the following we show that neither~\eqref{eq:nr1a} and
		\eqref{eq:nr1b} nor~\eqref{eq:n1} and~\eqref{eq:n2} imply any
		inequalities for the observed variables other than the ones that
		follow without them.

		For (8), the only remaining inequalities containing $H(A_1 A_2)$ are~\eqref{eq:nr2} and~\eqref{eq:nr5}, both of which have $H(A_1 A_2)$ as a lower bound to other entropies as well as~\eqref{eq:nr1a} and~\eqref{eq:nr1b}, where $H(A_1 A_2)$ is an upper bound. After eliminating  $H(A_1 A_2)$, we hence obtain $H(A_1)+H(A_2) \geq H(A_1)$ and $H(A_1)+H(A_2) \geq H(A_2)$ as well as $H(A_1 T_1|S_2 R_2') \geq H(T_1| R_2)$ and $H(A_1 T_1|S_2 R_2') \geq H(T_1| R_2)+H(A_1)-H(A_2)$, where the first two immediately follow from positivity of the entropy and the third is implied by~\eqref{eq:nr5a} and $H(A_1 T_1|S_2 R_2')=H(T_1|A_1 S_2 R_2')+H(A_1| S_2 R_2')$. 
		
		If we were not to impose the inequalities~\eqref{eq:nr1a} and
		\eqref{eq:nr1b}, the variable elimination would lead to
		$H(A_1)+H(A_2) \geq 0$ and
		$H(A_1 T_1|S_2 R_2') \geq H(T_1| R_2) - H(A_2)$, the first of which is
		implied by positivity and the second by~\eqref{eq:nr5a} and
		monotonicity for cq-states. We now show that after the elimination of
		$H(A_2)$ and $H(A_1)$ the additional inequalities we obtained from
		\eqref{eq:nr1a} and~\eqref{eq:nr1b},
		\begin{align}
		H(A_2) &\geq -(H(A_1 T_1|S_2 R_2')-H(T_1| R_2))+H(A_1) \label{eq:nr6} \\
		H(A_1) &\geq -(H(A_2 T_1|S_1 R_1')-H(T_1| R_1))+H(A_2) \label{eq:nr7}
		\end{align}
		become redundant (here we have put back the relation where $A_1$ and
		$A_2$ are interchanged). 
		To see this, assume that in addition the constraint $H(A_2) \geq H(A_1)$  holds (which could always be achieved by adding an independent system in a maximally mixed state to $A_2$ and which also preserves $H(A_1|A_2) \geq 0$ and $H(A_2|A_1) \geq 0$). This implies all inequalities~\eqref{eq:nr6}. Since it is then the only inequality where $H(A_2)$ upper bounds other entropies, it is in the elimination of $H(A_2)$ combined with all inequalities that involve $H(A_2)$. It furthermore renders~\eqref{eq:nr7} redundant after elimination (since only the inequalities~\eqref{eq:nr6} have $H(A_2)$ as an upper bound). Since there is no inequality left with $H(A_1)$ as an upper bound, the subsequent elimination of $H(A_1)$ leads to the same inequalities as we obtain without including  $H(A_1|A_2) \geq 0$,  $H(A_2|A_1) \geq 0$ and $H(A_2) \geq H(A_1)$.

		For ($8'$), using $H(A_1 A_2) = 0$ changes~\eqref{eq:nr2} and~\eqref{eq:nr5} to $H(A_1)+H(A_2) \geq 0$ and $H(A_1 T_1|S_2 R_2') \geq H(T_1| R_2) - H(A_2)$ respectively, which we have seen to be the inequalities that also follow upon elimination of $H(A_1 A_2)$ if~\eqref{eq:n1} is not imposed and which we have also seen to be redundant.
		
		Now, in all inequalities where $H(A_1)$ or $H(A_2)$ occur,  they are lower bounds (see~\eqref{eq:nr5a}) or as $H(A_1) \geq 0$ or $H(A_2)\geq 0$. Thus, setting $H(A_2)=H(A_1)$ and then eliminating $H(A_1)$ is equivalent to eliminating them each separately.
	\end{proof}
	
	\newpage
	\section{Inequalities for the bilocal causal structure with classical variables}
	\label{app:biloc}
	In the following we give one representative of each of the $53$ types of entropy inequalities for the bilocal causal structure. The remaining inequalities can be generated by using the symmetries under exchange of $X_0$ and $X_1$, $Y_0$ and $Y_1$ as well as $Z_0$ and $Z_1$ and the exchange of the pairs of variables $(X_0, X_1)$ with $(Z_0, Z_1)$. We list them in terms of the coefficients of the entropies in the inequalities such that each row understood as a vector ${\bf v}$ imposes an inequality ${\bf v}.{\bf H} \geq 0$ for the entropy vectors ${\bf H}$.
	
	\vspace{1cm}
	
	\hspace{-1cm}{\tiny
		$
		\begin{array}{c|ccccccccccccccc}
		\text{\#}&\! \!	H(X_0) \! \! & \! \! H(X_1) \! \! & \! \! H(Y_0) \! \! & \! \! H(Y_1) \! \! & \! \! H(Z_0) \! \! & \! \! H(Z_1) \! \! & \! \!  H(X_0 Y_0) \! \! & \! \! H(X_0 Y_1) \! \! & \! \! H(X_0 Z_0) \! \! & \! \! H(X_0 Z_1) \! \! & \! \! H(X_1 Y_0) \! \! & \! \! H(X_1 Y_1) \! \! & \! \! H(X_1 Z_0) \! \! & \! \! H(X_1 Z_1) \! \! & \! \! H(Y_0 Z_0) \\
		\hline \\
		1&0 & -1 & 0 & 0 & 0 & -1 & 0 & 0 & 0 & 0 & 0 & 0 & 0 & 1 & 0 \\
		2&0 & -1 & 0 & 0 & 0 & -1 & 0 & 0 & 0 & 0 & 0 & 0 & 0 & 0 & 0 \\
		3&0 & 0 & 0 & 0 & 0 & 0 & 0 & 0 & 0 & 0 & 0 & -1 & 0 & 0 & 0 \\
		4&0 & -1 & 0 & 0 & 0 & -1 & 0 & 0 & 0 & 0 & 0 & 0 & 0 & 0 & 0 \\
		5&0 & -1 & 0 & 0 & 0 & -1 & 0 & 0 & 0 & 0 & 0 & 0 & 0 & 0 & 0 \\
		6&0 & -1 & 0 & 0 & 0 & -1 & 0 & 0 & 0 & 0 & 0 & 0 & 0 & 0 & 0 \\
		7&0 & -1 & 0 & 0 & 0 & -1 & 0 & 0 & 0 & 0 & 0 & 0 & 0 & 0 & 0 \\
		8&0 & -1 & 0 & 0 & 0 & -1 & 0 & 0 & 0 & 0 & 0 & 0 & 0 & 0 & 0 \\
		9&0 & -1 & 0 & 0 & 0 & -1 & 0 & 0 & 0 & 0 & 0 & 0 & 0 & 0 & 0 \\
		10&0 & -1 & 0 & 0 & 0 & -1 & 0 & 0 & 0 & 0 & 0 & 0 & 0 & 0 & 0 \\
		11&0 & -1 & 0 & 0 & 0 & -1 & 0 & 0 & 0 & 0 & 0 & -1 & 0 & 0 & -1 \\
		12&0 & -1 & 0 & 0 & 0 & -1 & 0 & -1 & 0 & 0 & 0 & 1 & 0 & 0 & 0 \\
		13&0 & -1 & 0 & 0 & 0 & -1 & 0 & -1 & 0 & 0 & 0 & 1 & 0 & 0 & -1 \\
		14&0 & -1 & 0 & 0 & 0 & -1 & 0 & -1 & 0 & 0 & 0 & 1 & 0 & 0 & -1 \\
		15&0 & -1 & 0 & 0 & 0 & -1 & 0 & 0 & 0 & 0 & 0 & 0 & 0 & 0 & 0 \\
		16&0 & -1 & 0 & 0 & 0 & -1 & 0 & 0 & 0 & 0 & 0 & 0 & 0 & 0 & 0 \\
		17&0 & -1 & 0 & 0 & 0 & -1 & 0 & 0 & 0 & 0 & 0 & 0 & 0 & 0 & 0 \\
		18&0 & -1 & 0 & 0 & 0 & -1 & 0 & -1 & 0 & 0 & 0 & 1 & 0 & 0 & -2 \\
		19&0 & -1 & 0 & 0 & 0 & -1 & 0 & 0 & 0 & 0 & 0 & -1 & 0 & 0 & 0 \\
		20&0 & -1 & 0 & 0 & 0 & -1 & 0 & 0 & 0 & 0 & 0 & -1 & 0 & 0 & 0 \\
		21&0 & -1 & 0 & 0 & 0 & -1 & 0 & 0 & 0 & 0 & 0 & -1 & 0 & 0 & 0 \\
		22&0 & -1 & 0 & 0 & 0 & -1 & 0 & 0 & 0 & 0 & 0 & -1 & 0 & 0 & -1 \\
		23&0 & -1 & 0 & 0 & 0 & -1 & 0 & 0 & 0 & 0 & 0 & -1 & 0 & 0 & -1 \\
		24&0 & -1 & 0 & 0 & 0 & -1 & 0 & -1 & 0 & 0 & 0 & 1 & 0 & 0 & -1 \\
		25&0 & -1 & 0 & 0 & 0 & -1 & 0 & -1 & 0 & 0 & 0 & 1 & 0 & 0 & -1 \\
		26&0 & -1 & 0 & 0 & 0 & -1 & 0 & -1 & 0 & 0 & 0 & 1 & 0 & 0 & -1 \\
		27&0 & -1 & 0 & 0 & 0 & -1 & 0 & -1 & 0 & 0 & 0 & 1 & 0 & 0 & -1 \\
		28&0 & -1 & 0 & 0 & 0 & -1 & 0 & -1 & 0 & 0 & 0 & 0 & 0 & 0 & 0 \\
		29&0 & -1 & -1 & 0 & 0 & -1 & 0 & 0 & 0 & 0 & 0 & 0 & 0 & 0 & 0 \\
		30&0 & -1 & -1 & 0 & 0 & -1 & 0 & 0 & 0 & 0 & 0 & 0 & 0 & 0 & 0 \\
		31&0 & -1 & 0 & 0 & 0 & -1 & 0 & 0 & 0 & 0 & 0 & -1 & 0 & 0 & 0 \\
		32&0 & -1 & 0 & 0 & 0 & -1 & 0 & 0 & 0 & 0 & 0 & -1 & 0 & 0 & 0 \\
		33&0 & -1 & 0 & 0 & 0 & -1 & 0 & -1 & 0 & 0 & 0 & 1 & 0 & 0 & 0 \\
		34&0 & -1 & 0 & 0 & 0 & -1 & 0 & -1 & 0 & 0 & 0 & 1 & 0 & 0 & -2 \\
		35&0 & -1 & 0 & 0 & 0 & -1 & 0 & -1 & 0 & 0 & 0 & 1 & 0 & 0 & -2 \\
		36&0 & -1 & 0 & 0 & 0 & -1 & 0 & -1 & 0 & 0 & 0 & 0 & 0 & 0 & 0 \\
		37&0 & -1 & -1 & 0 & 0 & -1 & 0 & 0 & 0 & 0 & 0 & 0 & 0 & 0 & 0 \\
		38&0 & -1 & -1 & 0 & 0 & -1 & 0 & 0 & 0 & 0 & 0 & 0 & 0 & 0 & 0 \\
		39&0 & 0 & 0 & -1 & 0 & 0 & 0 & 0 & 0 & -1 & 0 & 0 & -1 & 0 & 0 \\
		40&0 & 0 & 0 & -1 & 0 & 0 & 0 & 0 & 0 & -1 & 0 & 0 & -1 & 0 & 0 \\
		41&0 & -1 & 0 & 0 & 0 & -1 & 0 & 0 & 0 & 0 & 0 & -1 & 0 & 0 & -1 \\
		42&0 & -1 & 0 & 0 & 0 & -1 & 0 & 0 & 0 & 0 & 0 & -1 & 0 & 0 & -1 \\
		43&0 & -1 & 0 & 0 & 0 & -1 & 0 & 0 & 0 & 0 & 0 & -1 & 0 & 0 & -1 \\
		44&0 & -1 & 0 & 0 & 0 & -1 & 0 & 0 & 0 & 0 & 0 & -1 & 0 & 0 & -1 \\
		45&0 & -1 & 0 & 0 & 0 & -1 & 0 & -1 & 0 & 0 & 0 & 0 & 0 & 0 & -1 \\
		46&0 & -1 & 0 & 0 & 0 & -1 & 0 & -1 & 0 & 0 & 0 & 0 & 0 & 0 & -1 \\
		47&0 & -1 & 0 & 0 & 0 & -1 & 0 & -1 & 0 & 0 & 0 & 0 & 0 & 0 & -1 \\
		48&0 & -1 & 0 & 0 & 0 & -1 & 0 & -1 & 0 & 0 & 0 & 0 & 0 & 0 & -1 \\
		49&0 & -1 & 0 & 0 & 0 & -1 & 0 & -1 & 0 & 0 & -1 & 1 & 0 & 0 & 0 \\
		50&0 & -1 & 0 & 0 & 0 & -1 & 0 & -1 & 0 & 0 & -1 & 1 & 0 & 0 & 0 \\
		51&0 & -1 & 0 & 0 & 0 & -1 & 0 & -1 & 0 & 0 & -1 & 1 & 0 & 0 & -1 \\
		52&0 & -1 & 0 & 0 & 0 & -1 & 0 & -1 & 0 & 0 & -1 & 1 & 0 & 0 & -1 \\
		53&0 & -1 & 0 & 0 & 0 & -1 & -1 & -1 & 0 & 0 & 1 & 0 & 0 & 0 & 0 \\
		\end{array}
		$
	}
\newpage

{\tiny
		\hspace{-1cm}$
		\begin{array}{c|ccccccccccc}
		\text{\#}&H(Y_0 Z_1) \! \! & \! \! H(Y_1 Z_0) \! \! & \! \! H(Y_1 Z_1) \! \! & \! \! H(X_0 Y_0 Z_0) \! \! & \! \! H(X_0 Y_0 Z_1) \! \! & \! \! H(X_0 Y_1 Z_0) \! \! & \! \! H(X_0 Y_1 Z_1) \! \! & \! \! H(X_1 Y_0 Z_0) \! \! & \! \! H(X_1 Y_0 Z_1) \! \! & \! \! H(X_1 Y_1 Z_0) \! \! & \! \! H(X_1 Y_1 Z_1) \\
		\hline \\
		1&0 & 0 & 0 & 0 & 0 & 0 & 0 & 0 & 0 & 0 & 0 \\
		2&0 & -1 & 1 & 0 & 0 & 0 & 0 & 0 & 0 & 1 & 0 \\
		3&0 & 0 & -1 & 0 & 0 & -1 & 1 & 0 & 0 & 1 & 1 \\
		4&0 & 0 & -1 & 0 & 0 & -1 & 1 & 0 & 0 & 1 & 1 \\
		5&0 & 0 & -1 & 0 & -1 & 0 & 1 & 0 & 1 & 0 & 1 \\
		6&0 & 0 & -1 & -1 & 0 & 0 & 1 & 1 & 0 & 0 & 1 \\
		7&0 & -1 & 0 & 0 & -1 & 1 & 0 & 0 & 1 & 0 & 1 \\
		8&0 & -1 & 0 & 0 & -1 & 0 & 1 & 0 & 1 & 1 & 0 \\
		9&0 & -1 & 0 & -1 & 0 & 1 & 0 & 1 & 0 & 0 & 1 \\
		10&0 & -1 & 0 & -1 & 0 & 0 & 1 & 1 & 0 & 1 & 0 \\
		11&1 & 0 & 0 & 0 & 0 & 0 & 0 & 0 & 0 & 1 & 1 \\
		12&0 & -1 & 1 & 0 & 0 & 1 & 0 & 0 & 0 & 0 & 0 \\
		13&0 & 0 & 0 & 1 & 0 & 0 & 0 & 0 & 1 & 0 & 0 \\
		14&0 & 0 & 0 & 0 & 1 & 0 & 0 & 1 & 0 & 0 & 0 \\
		15&0 & -2 & 1 & 0 & 0 & 1 & -1 & 0 & 0 & 1 & 1 \\
		16&0 & -2 & 1 & 0 & -1 & 1 & 0 & 0 & 1 & 1 & 0 \\
		17&0 & -2 & 1 & -1 & 0 & 1 & 0 & 1 & 0 & 1 & 0 \\
		18&1 & 0 & 0 & 1 & 0 & 0 & 0 & 1 & 0 & 0 & 0 \\
		19&0 & 0 & -1 & -1 & 0 & 0 & 1 & 0 & 1 & 1 & 1 \\
		20&0 & -1 & 1 & 0 & 0 & 1 & -1 & 0 & 0 & 1 & 1 \\
		21&0 & -1 & 0 & -1 & 0 & 1 & 0 & 0 & 1 & 1 & 1 \\
		22&1 & 0 & 0 & 1 & -1 & 0 & 0 & 0 & 0 & 1 & 1 \\
		23&1 & 0 & 0 & 0 & 0 & 0 & 0 & 1 & -1 & 1 & 1 \\
		24&0 & 0 & 0 & 1 & 0 & 1 & 0 & 0 & 1 & -1 & 0 \\
		25&0 & 0 & 0 & 1 & 0 & 0 & 1 & 0 & 1 & 0 & -1 \\
		26&0 & 0 & 0 & 0 & 1 & 1 & 0 & 1 & 0 & -1 & 0 \\
		27&0 & 0 & 0 & 0 & 1 & 0 & 1 & 1 & 0 & 0 & -1 \\
		28&0 & -1 & 0 & -1 & 0 & 1 & 1 & 0 & 1 & 1 & 0 \\
		29&0& -1 & 1 & 1 & 0 & 0 & -1 & 0 & 1 & 1 & 0 \\
		30&0 & -1 & 1 & 0 & 1 & 0 & -1 & 1 & 0 & 1 & 0 \\
		31&0 & 0 & -1 & -2 & 1 & 0 & 1 & 1 & 0 & 1 & 1 \\
		32&0 & -1 & 0 & -2 & 1 & 1 & 0 & 1 & 0 & 1 & 1 \\
		33&0 & -2 & 1 & 0 & 0 & 1 & 1 & 0 & 0 & 1 & -1 \\
		34&1 & 0 & 0 & 1 & 0 & 1 & 0 & 1 & 0 & -1 & 0 \\
		35&1 & 0 & 0 & 1 & 0 & 0 & 1 & 1 & 0 & 0 & -1 \\
		36&0 & -1 & 0 & -2 & 1 & 1 & 1 & 1 & 0 & 1 & 0 \\
		37&0 & -1 & 1 & 1 & 0 & 1 & -2 & 0 & 1 & 0 & 1 \\
		38&0 & -1 & 1 & 0 & 1 & 1 & -2 & 1 & 0 & 0 & 1 \\
		39&0 & 0 & 0 & 0 & 1 & -1 & 1 & 1 & -1 & 1 & 1 \\
		40&0 & 0 & 0 & -1 & 1 & 1 & 1 & 1 & -1 & 1 & 0 \\
		41&1 & 0 & -1 & 0 & -1 & 0 & 1 & 0 & 1 & 1 & 1 \\
		42&1 & 0 & -1 & -1 & 0 & 0 & 1 & 1 & 0 & 1 & 1 \\
		43&1 & -1 & 0 & 0 & -1 & 1 & 0 & 0 & 1 & 1 & 1 \\
		44&1 & -1 & 0 & -1 & 0 & 1 & 0 & 1 & 0 & 1 & 1 \\
		45&1 & 0 & -1 & 0 & -1 & 1 & 1 & 0 & 1 & 0 & 1 \\
		46&1 & 0 & -1 & -1 & 0 & 1 & 1 & 1 & 0 & 0 & 1 \\
		47&1 & -1 & 0 & 0 & -1 & 1 & 1 & 0 & 1 & 1 & 0 \\
		48&1 & -1 & 0 & -1 & 0 & 1 & 1 & 1 & 0 & 1 & 0 \\
		49&-1 & 0 & 0 & 0 & 1 & 0 & 1 & 1 & 1 & -1 & 0 \\
		50&-1 & -1 & 1 & 0 & 1 & 0 & 0 & 1 & 1 & 0 & 0 \\
		51&0 & 0 & 0 & 1 & 0 & 0 & 1 & 1 & 1 & -1 & 0 \\
		52&0 & -1 & 1 & 1 & 0 & 0 & 0 & 1 & 1 & 0 & 0 \\
		53&0 & 0 & -1 & 0 & 1 & 1 & 1 & -1 & 0 & 0 & 1 
		\end{array}
		$}\end{document}